\newtheorem{lemma}{Lemma}
\newtheorem{prop}{Proposition}
\newcommand{\complex}{\mathbb{C}}
\newcommand{\valos}{\mathbb{R}}
\newcommand{\eps}{\varepsilon}
\newtheorem{thm}{Theorem}
\newcommand{\ket}[1]{{\left|#1\right\rangle}}
\newcommand{\bra}[1]{{\left\langle #1\right|}}
\begin{document}

   \tikzset{
      psi/.pic={
            \draw [thick] (0.5,1) to (0.7,1);
            \draw [thick] (0.7,0.85) rectangle (2.3,1.15);
              \draw [thick] (2.3,1) to (2.5,1);
        }
    }

   \tikzset{
      psiny/.pic={
            \draw [<-,thick] (0.5,1) to (0.7,1);
            \draw [thick] (0.7,0.85) rectangle (2.3,1.15);
              \draw [thick] (2.3,1) to (2.5,1);
        }
    }
    
    \tikzset{
      omega/.pic={
              \draw [thick] (-0.5,1) to (-0.15,1);
           \draw [thick] (0.5,1) to (0.15,1);
        \draw [thick] (-0.15,0.85) rectangle (0.15,1.15);
        \draw [->,thick] (0,1.15) to  (0,\ymax);
        \node at (0,0.35) {$\omega$};
        }
      }

    \tikzset{
      omegany/.pic={
              \draw [<-,thick] (-0.5,1) to (-0.15,1);
           \draw [thick] (0.5,1) to (0.15,1);
        \draw [thick] (-0.15,0.85) rectangle (0.15,1.15);
        \draw [->,thick] (0,1.15) to  (0,\ymax);
        \node at (0,0.35) {$\omega$};
        }
      }
      
\numberwithin{equation}{section}

\title{Integrable Matrix Product States from boundary integrability}
\author[1,2]{Bal\'azs Pozsgay}
\author[3]{Lorenzo Piroli}
\author[4]{Eric Vernier}

\affil[1]{Department of Theoretical Physics, Budapest University
	of Technology and Economics, 1111 Budapest, Budafoki \'{u}t 8, Hungary}
\affil[2]{BME Statistical Field Theory Research Group, Institute of Physics,
	Budapest University of Technology and Economics, 1111 Budapest, Budafoki \'{u}t 8, Hungary}
\affil[3]{Max-Planck-Institut f\"ur Quantenoptik, Hans-Kopfermann-Str. 1, 85748 Garching, Germany}
\affil[4]{The Rudolf Peierls Centre for Theoretical Physics, Oxford University, Oxford, OX1 3NP, United Kingdom.}
\date{}

\maketitle

\abstract{We consider integrable Matrix Product States (MPS) in integrable
  spin chains and show
    that they
  correspond to ``operator valued'' solutions of the so-called twisted
  Boundary Yang-Baxter (or reflection) equation.
We argue that the integrability condition is equivalent to a new
linear intertwiner relation, which we call the ``square root
relation'', because it involves half of the steps of the
reflection equation. It is then shown that the square root relation leads
to the full Boundary Yang-Baxter equations.
We provide explicit solutions in a number of cases characterized by 
special symmetries. These correspond to the ``symmetric pairs''
$(SU(N),SO(N))$ and
$(SO(N),SO(D)\otimes SO(N-D))$,
where in each pair the first 
and second elements are the symmetry groups of the spin chain and the 
integrable state, respectively.
These solutions can be considered
as explicit representations of the corresponding twisted Yangians,
that are new in a number of cases.
Examples include certain concrete MPS relevant
for the computation of one-point functions in defect AdS/CFT.}

\section{Introduction}

In the past decade considerable effort was devoted to the
study of non-equilibrium dynamics of integrable models \cite{nonequilibrium-intro-review}. One of the
main questions was equilibration and thermalization in closed
integrable systems
\cite{rigol-quench-review,essler-fagotti-quench-review}, and a common 
setting to study these problems has 
been the quantum quench. By definition, quenching is a sudden change
of certain parameters of the Hamiltonian, and in the simplest case
this means that the ground state of some other (integrable or
non-integrable) model is released to evolve according to the
post-quench integrable Hamiltonian.
However, the condition of having a
concrete pre-quench Hamiltonian can be relaxed, and we can also study
time evolution started from other well-defined states, which are
experimentally realizable. 

Regarding global quenches one of the central questions was whether the
system relaxes to a steady state that can be described by the
so-called Generalized Gibbs Ensemble (GGE) \cite{rigol-gge}. Whereas there is no
model-independent answer to this question, the answer was found to be
positive in models equivalent to free bosons/fermions \cite{ising-quench-0,ising-quench-1,ising-quench-2,ising-quench-3,essler-truncated-gge,spyros-calabrese-gge} and in the
XXZ spin chain, which is one of the simplest yet most important
interacting solvable systems \cite{JS-CGGE}. The status of the GGE for
models with higher rank symmetries has not yet been clarified. 

The main point of the GGE is that it can describe the steady states
arising after quenches from initial states satisfying the cluster
decomposition principle. However, its predicting abilities are
somewhat limited by its generality. In order to give exact predictions
one needs to know the Lagrange-multipliers associated to all conserved
charges. These are independent and need to be computed separately (for
the precise statements in the XXZ chain see \cite{JS-CGGE,enej-gge,sajat-eric}).

On the other hand, a different approach was suggested in
\cite{sajat-integrable-quenches} (see also
\cite{davide,delfino-quench,schuricht-quench}): one should study
quenches from 
those initial states that show certain signs of integrability
themselves, thus making the exact computation of the time evolution
more tractable. This class of states was called ``integrable initial
states'',
and a number of unifying properties were collected in
\cite{sajat-integrable-quenches}. Perhaps the most important one is that the overlaps between the eigenstates and
the initial states can be expressed in a simple factorized form
\cite{sajat-neel,Caux-Neel-overlap1,sajat-minden-overlaps,zarembo-neel-cite1,zarembo-neel-cite3,ADSMPS2,adscft-kristjansen-2017,ads-neel-cite-kristjansen-so5,kristjansen-proofs}. 
This allows for an exact treatment of the quantum quench, through the
Quench Action \cite{quench-action,caux-stb-LL-BEC-quench,JS-QA-intro} or the Quantum Transfer
Matrix \cite{sajat-integrable-quenches} methods. In practice the exact
solvability means that there is no need to compute the Lagrange
multipliers of the GGE  (or Bethe root densities of the various
particle types)  separately.

The factorized overlaps with the integrable states are non-zero only for parity symmetric eigenstates (to be more
precise, for Bethe states which are eigenstates of the space reflection operator). On the
level of Bethe rapidities this leads to the ``pair structure'': the
set of rapidities consists of pairs with opposite sign.
Also, it follows that these initial states are
annihilated by all odd (under space reflection) conserved charges of
the model.
It was argued in \cite{sajat-integrable-quenches}
that this annihilation property is the most general unifying feature
of integrable states, and it should be used as a definition. An important 
reason for choosing the annihilation property as a definition was
that it is relatively easy to check it for a specific state.

In  \cite{sajat-integrable-quenches} it was also shown that in spin
chains there is a
direct relation between integrable initial states and the theory of
integrable boundaries: every solution to the so-called Boundary
Yang-Baxter (BYB) relations can be used to construct a two-site product state which
is integrable according to the new definition. This relation can be
considered the lattice counterpart of a similar correspondence in
integrable QFT, which was worked out in the pioneering work of Ghoshal
and Zamolodchikov \cite{Ghoshal:1993tm}. The work 
\cite{sajat-integrable-quenches} only treated models with crossing
symmetry (mainly the XXZ chain and its higher spin counterparts),
where the elements of the so-called $K$-matrices are translated into
the real space two-site block using the crossing matrix. In the XXZ
chain all two-site states can be obtained this way. Our further papers
\cite{sajat-su3-1,sajat-su3-2} studied two-site states in the $SU(3)$-symmetric chain, which lacks crossing
symmetry. In such models there are two types of integrable boundary
constructions, that are obtained from the conventional (``untwisted'') and the twisted
Boundary Yang-Baxter relation. In \cite{sajat-su3-1,sajat-su3-2} we showed that the local two-site
states are always related to solutions of the latter. 

An important consequence of the relation with boundary integrability
is that the Loschmidt amplitude can be computed using analytic methods
\cite{sajat-BQTM,sajat-Loschm,sajat-nonanalytic,andrei-loschmidt}.
On the one hand, this leads to an independent method to find the 
rapidity distribution functions characterizing the post-quench steady 
state \cite{sajat-Loschm,sajat-su3-1,sajat-su3-2}. On the other hand, it allows one to prove the 
validity of important properties of the latter, such as the $Y$-system 
relations for the rapidity distribution functions \cite{sajat-Loschm}, which were 
previously only conjectured \cite{jacopo-massless-1}.
Furthermore, this
correspondence can be used to determine the thermodynamic part of the
overlaps, which in the XXZ chain led to a conjecture for the overlaps with
arbitrary two-site states \cite{sajat-minden-overlaps}.

Integrable initial states have been studied also in QFT's, see
for example
  \cite{davide,Gabor-initial-states-QFT-quench,viti-delfino,Cardy-boundary-states-flow,gabor-quench-overlaps,lencses-viti-takacs}.
   An
  early finite volume overlap formula (having the same structure as
  the later results) already appeared in
  \cite{sajat-marci-boundary}.

In a completely independent line of development, integrable initial
states were also found in the context of the AdS/CFT conjecture
\cite{zarembo-neel-cite1,zarembo-neel-cite3,ADSMPS2,adscft-kristjansen-2017,ads-neel-cite-kristjansen-so5,kristjansen-proofs}. 
One-point functions in theories with a defect are given by overlaps
with eigenstates of some local spin chain 
and certain Matrix Product States (MPS) constructed from the
generators of group symmetries. It was found that
all corresponding overlaps display the pair structure, even in the higher rank
models that are solved by the nested Bethe Ansatz. Moreover, the
overlaps have a factorized form: they are given by polynomials
of $Q$-functions and a ratio of the so-called Gaudin-like
determinants. The pair structure implies integrability of these
states, which was proven by an independent method in  \cite{kristjansen-proofs}.

Despite the increasing amount of information about the new integrable
MPS their relation with boundary integrability remained unknown.
In \cite{sajat-integrable-quenches} it was shown that in
the XXZ chain all known integrable MPS are obtained by the action of
transfer matrices on two-site states, thus they fit into the 
framework of the BYB. However, in models with higher rank symmetries
this question remained unanswered.

This is the problem that we intend to treat in the present paper. We
study the connection between the integrability 
condition and the twisted BYB relation, and 
construct new explicit solutions
that produce
the integrable states found in AdS/CFT. As a byproduct, we also obtain
new integrable MPS's.

Our solutions of the twisted BYB involve an additional degree of
freedom corresponding to the auxiliary space of 
the MPS, thus they are ``operator-valued solutions''. The twisted BYB
relation (supplied with a certain symmetry relation) serves as the
defining relation of the twisted Yangian
\cite{olshanskii2,molev-yangians-review}. Our solutions are
thus explicit representations of the twisted Yangians of various types.
The representation theory of these twisted Yangians has been studied
 in detail in
\cite{molev-representations,Vidas-BCD1,Vidas-BCD2}. In the main text
we explain how our results fit into this framework and show that our
explicit realizations are new in many cases. 

The structure of the paper is as follows. In Section \ref{sec:iMPS} we
introduce the models and some examples for the integrable MPS to be
studied. In \ref{sec:inte} we investigate the integrability condition
and show how it is related to the twisted Boundary Yang-Baxter
relation. Here we also introduce a new linear intertwining relation
called the ``square root relation''. In \ref{sec:twisted} we explain
the connection between our formalism and the defining relations of the
twisted Yangian, and we also discuss the known results about the
representations of the latter object. Sections \ref{sec:su} and
\ref{sec:so} include explicit solutions for the intertwining
relation; these solutions describe the integrable MPS introduced
earlier. We conclude in \ref{sec:conclusions}, and two simple
computations are described in Appendices \ref{sec:C1}
and \ref{sec:XXZ}.

\section{Integrable MPS}

\label{sec:iMPS}

In this work we deal with integrable lattice models associated to
specific Yang-Baxter algebras, possessing certain group symmetries.
The central object is the fundamental $R$-matrix acting on the tensor product of
two vector spaces $V_1\otimes V_2$. We will be interested in cases
when the $R$-matrix is symmetric with respect to of classical
Lie-group $\mathcal{G}$, and the local physical vector spaces carry the
defining representation.
We will consider
the $\mathcal{G}=SU(N)$ invariant case \cite{Yang-nested,zam-zam}
\begin{equation}
  \label{SUNR}
R(u)=\frac{P+u}{1+u}
\end{equation}
and the $SO(N)$ invariant case  \cite{zam-zam}
\begin{equation}
  \label{SONR}
 R(u)=\frac{(u+c)P+u(u+c)-uK}{(u+c)(u+1)}.
\end{equation}
Here $P$ is the permutation operator, and $K$ is the so-called trace
operator with matrix elements $K^{cd}_{ab}=\delta_{ab}\delta^{cd}$ and
$c=N/2-1$. We limit ourselves to these two series of symmetry groups, because their
relevance to quantum many body physics
\cite{KiserletiOsszefogl-batchelor-foerster1}\footnote{
Continuum models with $SU(N)$ symmetry have been studied in
experiments with cold
atomic gases \cite{fermi-gases-experiments-review}. In this paper we only treat lattice models, but they can
be used as a starting point towards the continuum models, which
can be obtained by certain scaling limits.
} and to the AdS/CFT
correspondence \cite{zarembo-SO6}.

Both $R$-matrices satisfy the unitarity condition
\begin{equation}
 R(u) R(-u)= \mathbb{1}
\end{equation}
and the
Yang-Baxter relation
\begin{equation}
  \label{YB0}
  R_{23}(v-z) R_{13}(u-z)   R_{12}(u-v)=
     R_{12}(u-v)  R_{13}(u-z)   R_{23}(v-z),
   \end{equation}
   which is understood as an equation in $End(V_1\otimes V_2\otimes
   V_3)$ and $R_{jk}$ acts on the spaces $j$ and $k$.
   
We will also frequently use the matrix $\check
R(u)=PR(u)$, which satisfies
 \begin{equation}
   \label{YB}
  \check R_{12}(v-z) \check R_{23}(u-z)  \check R_{12}(u-v)=
    \check R_{23}(u-v) \check R_{12}(u-z)  \check R_{23}(v-z).
\end{equation}
Our $R$-matrices also satisfy the permutation symmetry
\begin{equation}
  \label{Psym}
  PR(u)P=R(u).
\end{equation}
We put forward that our methods and results can be extended to models where
\eqref{Psym} does not hold, for example to the so-called Perk-Schulz
model associated to the quantum group $U_q(sl(N))$.
Nevertheless here we assume  \eqref{Psym}, which will
imply some minor simplifications in some of the computations.

Group invariance of the $R$-matrices means that for any $G\in\mathcal{G}$
\begin{equation}
  \label{Rgroup}
  (G_1\otimes G_2) R(u)=   R(u)   (G_1\otimes G_2).
\end{equation}
Here it is understood that the group element $G$ acts in the defining
representation in the spaces $V_1$ and $V_2$. Sometimes we will
loosely denote by $\mathcal{G}$ the defining representation as well.

A special role will be played by the partial transpose operation. In
both cases \eqref{SUNR} and \eqref{SONR} we define
\begin{equation}
  R^T(u)\equiv R^{T_1}(u)=R^{T_2}(u),
\end{equation}
where $T_{1,2}$ denote partial transposition with respect to the first
or second vector spaces; their equality follows from the explicit
forms of the $R$-matrices. The group invariance properties of $R^T(u)$
are obtained from \eqref{Rgroup} after partial transposition, for example
\begin{equation}
  \label{Rgroup2}
  (G_1\otimes (G_2^T)^{-1}) R^T(u) =   R^T(u)   (G_1\otimes (G_2^T)^{-1}).
\end{equation}
In the $SO(N)$ symmetric case we have 
$(G^T)^{-1}=G$ for every group element, thus $R^T(u)$ is also group invariant. 
Moreover, the matrix satisfies the following crossing relation:
\begin{equation}
  \label{SONcrossing}
  R^{T}(u)=
\frac{(u+c-1)u}{(u+c)(u+1)}
  R(-u-c),\qquad c=N/2-1.
  \end{equation}
This also implies that $u=-c/2$ is a crossing
symmetric point, where
\begin{equation}
  R^{T}(-c/2)=R(-c/2).
\end{equation}
On the other hand, in the $SU(N)$-symmetric case we get the group
symmetry property
\begin{equation}
  \label{Rgroup3}
  (G_1\otimes G_2^*) R^T(u) =   R^T(u)   (G_1\otimes G_2^*),
\end{equation}
where the asterix denotes complex conjugation. We can thus interpret
that $R^T(u)$ acts on the tensor product of a defining and a 
conjugate representation of $SU(N)$.

A further important property of the $R$-matrices is the initial
condition
\begin{equation}
  \label{Rinit}
 \check R(0) = \mathbb{1},
\end{equation}
which is satisfied in both cases.

We consider spin chains of length $L$ with Hilbert spaces
$\mathcal{H}=\otimes_{j=1}^L \complex^N$.
We define the monodromy matrix of a homogeneous spin chain of length
$L$ as an operator acting on $\mathcal{H}\otimes V_0$ 
\begin{equation}
  T(u)=R_{0L}(u)\dots R_{02}(u)R_{01}(u).
\end{equation}
Here $0$ refers to the so-called auxiliary space and $V_0\approx \complex^N$. The transfer matrix
is the trace over the auxiliary space:
\begin{equation}
  t(u)=\text{Tr}_0 T(u).
\end{equation}
Due to \eqref{YB} the transfer matrices form a commuting family of operators, and they
can be used to define the local conserved charges of the model as
\begin{equation}
  \label{Qdef}
  Q_j=\left.\left(\frac{d}{du}\right)^{j-1} \log t(u)\right|_{u=0},\qquad j=2,3,\dots
\end{equation}
With these conventions $Q_j$ is a sum of local operators spanning $j$
sites at most, and specifically $Q_2$ can be identified with a local
two-site Hamiltonian. With these normalizations we get in the $SU(N)$ case
\begin{equation}
H=Q_2=\sum_{j=1}^L (P_{j,j+1}-1),
\end{equation}
whereas in the $SO(N)$ case
\begin{equation}
 H=Q_2=\sum_{j=1}^L (P_{j,j+1}-K_{j,j+1}/c-1),
\end{equation}
and periodic boundary conditions are understood in both cases.

The behaviour of the charges under space
reflection is
\begin{equation}
  \label{Qprop}
  \Pi Q_j \Pi=(-1)^j Q_j,
\end{equation}
where $\Pi$ is the space reflection operator. For further use we also
introduce the space reflected transfer matrix:
\begin{equation}
  \tilde t(u)=\Pi t(u) \Pi=\text{Tr}_0\ R_{01}(u)\dots R_{0L}(u).
\end{equation}

In this work we discuss translationally invariant Matrix Product States
(MPS's). Let us take $N$   $d$-dimensional matrices $\omega_j$,
$j=1\dots N$, acting on a further, $d$ dimensional auxiliary space $V_A$ (not
to be confused with the auxiliary space used in the definition of the
transfer matrices).
The MPS is an element of the Hilbert space of the chain
defined as
\begin{equation}
  \label{psidef}
    |\Psi_\omega\rangle=\sum_{j_1,\dots,j_L=1}^N{\rm tr}_{A}
\left[\omega_{j_L}\dots  \omega_{j_2}  \omega_{j_1}   \right]
|j_L,\dots,j_2,j_1\rangle.
\end{equation}
Here $|j_L,\dots,j_2,j_1\rangle$ are the real space basis vectors
with $j_k=1,\dots,N$, $k=1\dots L$.
A graphical interpretation of the MPS is given in
Fig. \ref{fig:mps}. It
is important that the vectors \eqref{psidef} are translationally
invariant, and the same set of matrices $\{\omega_j\}$ are used
for every finite volume $L$. In the rest of the paper we will use the
term MPS for both the set $\{\omega_j\}$ and the vectors
\eqref{psidef}, and sometimes we will use the short-hand
$\omega$ to refer to the collection of matrices $\{\omega_j\}$.

    \begin{figure}
      \centering
        \begin{tikzpicture}

\foreach \x [count=\n]in {0,1,2,3}{ 
   \begin{scope}[xshift = \x cm]
         \draw [->,thick] (5,0.8) to (5,1);
\draw [thick] (4.85,0.5) rectangle (5.15,0.8);
\draw [thick] (4.5,0.65) to (4.85,0.65);
\draw [thick] (5.15,0.65) to (5.5,0.65);
\node at (5,0) {$\omega$};
\pgfmathtruncatemacro\myy{round(4-\x)};
\node at (5,1.5) {$j_{\myy}$};
\end{scope}
\draw [->,thick] (4.6,0.65) to (4.5,0.65);
} 
\end{tikzpicture}
\caption{A pictorial representation of the Matrix Product State
  \eqref{psidef} built
  from the one-site block $\omega$. Here the outgoing indices
  $j_k=1,\dots,N$,
  $k=1,\dots,L$  represent
  the physical degrees of freedom, and the horizontal lines denote the
action of the matrices $\omega_{j_k}$. The matrices are assumed to act
from the right to the left; the arrow on the leftmost horizontal link
signals this convention.
The trace in the definition \eqref{psidef}
implies periodic boundary conditions.}
\label{fig:mps}
    \end{figure}
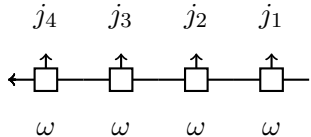

An MPS is invariant under a subgroup $\mathcal{G}'\subset \mathcal{G}$
if for every $G\in\mathcal{G}'$
\begin{equation}
  \label{groupinvariance}
\left[\otimes_{j=1}^L G_j\right]  \ket{\Psi_\omega}=\ket{\Psi_\omega}.
\end{equation}
This can be achieved by constructing a group invariant
$\omega$. Let us assume that there is a representation $\Lambda_\omega$ of $\mathcal{G}'$ acting on the
auxiliary space $V_A$. The building block $\omega$ is group invariant,
if for all $G\in\mathcal{G}'$
\begin{equation}
  \label{group1}
  \Lambda_\omega(G^{-1}) \omega_j\Lambda_\omega(G)=\sum_k G_{jk}  \omega_k,\qquad j=1,\dots,N,
\end{equation}
where $G_{jk}$ are the matrix elements of $G$ in the defining representation.
This relation immediately implies \eqref{groupinvariance}  \footnote{It can be
  shown, that given some additional assumptions the
converse is also true: If the MPS is such that
the group invariance \eqref{groupinvariance} holds in any volume $L$,
and the MPS is completely reducible (a property introduced below),
then Theorem \ref{simmm} implies the existence of a similarity
transformation which will form a representation of $G'$ on $V_A$, thus
proving the local group invariance property \eqref{group1}.}.
In the following we
will use the notation $(\mathcal{G},\mathcal{G'})$  to denote
 classes of the MPS, where it is understood that $\mathcal{G}$ and
 $\mathcal{G}'$ describe
 the group symmetries of the spin chain and the MPS,
 respectively.

The condition \eqref{group1} puts a restriction on the representation
$\Lambda_\omega$: it means that the scalar
 representation of $\mathcal{G}'$ has to be present in the
 decomposition of the
triple tensor product
 $\Lambda_\omega\otimes\bar\Lambda_\omega\otimes \Lambda_{\mathcal{G}'}$, where
 $\bar\Lambda$ is the conjugate representation to $\Lambda$ and
 $\Lambda_{\mathcal{G}'}$ is the representation of $\mathcal{G}'$
 obtained from the defining representation of $\mathcal{G}$ after the
 restriction $\mathcal{G}'\subset \mathcal{G}$.
 Examples for such group invariant
 MPS are listed below.
 
In this paper we will be interested in cases where $(\mathcal{G},\mathcal{G'})$ is a
so-called symmetric pair\footnote{A pair of Lie groups
  $(\mathcal{G},\mathcal{G'})$  is called a symmetric pair if there
  exists an involutive automorphism $\theta$ of $\mathcal{G}$ such
  that $\mathcal{G'}$ is the subgroup consisting of the
  $\theta$-invariant elements. An alternative definition on the level
  of Lie-algebras is that $(\mathfrak{g},\mathfrak{h})$ is a
  symmetric pair if $\mathfrak{g}$ can be split as a vector space into
$\mathfrak{h}\oplus\mathfrak{f}$ such that
$[\mathfrak{h},\mathfrak{h}]\in \mathfrak{h}$,
$[\mathfrak{h},\mathfrak{f}]\in \mathfrak{f}$ and
$[\mathfrak{f},\mathfrak{f}]\in \mathfrak{h}$. In this splitting
$\mathfrak{h}$ and $\mathfrak{f}$ are the eigenspaces of the
automorphism $\theta$ with eigenvalues 1 and -1, respectively.}.
The reason for concentrating on these pairs
is twofold. First, it is known that all solutions to the (twisted or
untwisted) Boundary Yang-Baxter relations are characterized by
symmetric pairs, or slight generalizations thereof (for the quantum
deformed case, see \cite{Mudrov-q,Vidas-generalized-Satake}). Second,
the known integrable initial states are characterized by symmetric
pairs. In this paper we do not attempt a rigorous classification of
all possible integrable MPS, instead we consider a few examples for
the symmetric pair $(\mathcal{G},\mathcal{G'})$.

In this work we focus on the special class of integrable MPS.
The properties of integrable states were
reviewed in detail in \cite{sajat-integrable-quenches}. It is a unifying
characteristic that they have non-zero overlaps only with the
parity-invariant Bethe states of the system,
which leads to the so-called pair structure for the Bethe
rapidities. This leads to the condition of annihilation by the odd
charges of the model:
\begin{equation}
  \label{int1}
  Q_{2j+1}\ket{\Psi}=0,\qquad j=1,\dots.
\end{equation}
It follows from \eqref{Qprop} and \eqref{Qdef} that 
(supplied with two-site shift invariance) this is
completely equivalent to
\begin{equation}
  \label{int2}
  t(u)\ket{\Psi}=\tilde t(u) \ket{\Psi}.
\end{equation}
This condition is stronger than simply parity
invariance: the relation $\Pi\ket{\Psi}=\pm \ket{\Psi}$ does not
imply \eqref{int2}. We stress that \eqref{int1} and \eqref{int2} are
exact equalities that hold in every finite volume.

In \cite{sajat-integrable-quenches} it was suggested that the
relations \eqref{int1}-\eqref{int2} should serve as definitions of
integrable states. Also, it was shown there that in the
spin-1/2 XXZ chains the known integrable MPS can be obtained from
solutions of the Boundary Yang-Baxter (BYB) relation. The connection
between integrability of the state and integrable boundaries is the following.

When having integrable boundaries one usually means boundaries in
space, and the goal is to set up a set of commuting transfer matrices, where
the so-called boundary $K$-matrices play an essential role. They
encode the boundary conditions, and they need to satisfy the so-called
reflection equations \cite{sklyanin-boundary}, guaranteeing the
commutativity of the transfer matrices. 
In contrast, in our case the integrable initial states can be considered as boundaries in
time. They can be interpreted as integrable boundaries if we build
classical 2D partition functions (corresponding to a lattice path
integral of some 1D quantum problem) and afterwards exchange the role
of space and time. 
This correspondence 
is a lattice counterpart of the picture in integrable QFT
\cite{Ghoshal:1993tm}. 
It is the purpose of the present work (and in particular Section
\ref{sec:inte}) to apply these ideas of \cite{sajat-integrable-quenches}
to the MPS in the higher rank $SU(N)$ and $SO(N)$ invariant models. 

Examples of integrable MPS were found in the context of the AdS/CFT
conjecture
\cite{zarembo-neel-cite1,zarembo-neel-cite3,ADSMPS2,adscft-kristjansen-2017,kristjansen-proofs},
and factorized overlap formulas were also derived there. 
The integrability of the MPS was proven in \cite{kristjansen-proofs} by a
method similar to our work; nevertheless the relation
with the boundary integrability framework remained unexplored. This is
the goal that we set in our paper. We
specifically focus on the following list of MPS, all of which are
integrable according to the definitions \eqref{int1}-\eqref{int2}. 

\begin{enumerate}
\item For the symmetric pair $(SU(3),SO(3))$ the MPS is given by the matrices
  \begin{equation}
    \omega_{j}=S_j, \quad j=1,2,3,
  \end{equation}
  where $S_j$ are the generators of the $SU(2)$-algebra
  \begin{equation}
  [S_j,S_k]=i\eps_{jkl}S_l
  \end{equation}
  in a finite dimensional irreducible representation.
In the simplest case of spin-1/2 representation we can choose simply
  $\omega_j=\sigma_j/2$ where $\sigma_j$ are the standard Pauli
  matrices.

The $SU(2)$-generators are in the adjoint representation, therefore
they satisfy \eqref{group1} with respect to the $SO(3)$ group.

This MPS appeared for the first time in \cite{ADSMPS2}, and the
quantum quench started from this state (in the spin-1/2 case) was
investigated in \cite{nested-quench-1}.

\item For the symmetric pair $(SU(N),SO(N))$ with any $N\ge 3$ the MPS is given by the
  matrices
   \begin{equation}
    \omega_{j}=\gamma_j,\quad j=1,\dots,N,
  \end{equation}
  where $\gamma_j$ are the $N$-dimensional gamma matrices satisfying
  the (euclidean) Clifford algebra relations
  \begin{equation}
    \{\gamma_j,\gamma_k\}=2\delta_{jk}.
  \end{equation}

The commutators of the gamma matrices can be used as generators of the $SO(N)$
Lie-algebra as
\begin{equation}
  t_{jk}=\frac{1}{4i}[\gamma_j,\gamma_k].
\end{equation}
The generators constitute the adjoint representation of
$SO(N)$, and the set of Dirac matrices satisfies \eqref{group1}
with respect to the spinor representation of $SO(N)$.

For $N=3$ this state coincides with the spin-1/2 member of the
previous family. For $N>3$ it is new.

\item For the symmetric pair $(SU(N),SO(N))$ with any $N\ge 3$ the MPS
  is given by symmetrically fused Gamma matrices
  \begin{equation}
    \omega_j=\Gamma^{(n)}_j/2,
  \end{equation}
where $\Gamma^{(n)}$ are defined in the main text in
\eqref{Gamman}. For $N=3$ they are identical to the first family
listed above, for higher $N$ they are new.
  
\item Non-trivial MPS with a different type of symmetry breaking are
  found as follows. Consider the symmetric pair $(SO(N),SO(D)\oplus
  SO(N-D))$ with some $N\ge 3$ and $1\le D \le N$ and the MPS
  \begin{equation}
    \omega_j=
    \begin{cases}
      \gamma_j & \text{ for } 1\le j \le D \\
      0 & \text{ for } D< j \le N,
    \end{cases}
  \end{equation}
where now $\gamma_j$ are the $D$-dimensional gamma matrices.

Two examples of this (for $N=6,D=3$ and $N=6,D=5$) were studied in
\cite{adscft-kristjansen-2017,ads-neel-cite-kristjansen-so5,kristjansen-proofs},
but the generalization to arbitrary $N,D$ is new.
\end{enumerate}

This list does not exhaust all the integrable MPS that were found in
the AdS/CFT literature. Symmetrically fused solutions
associated with the pairs $(SO(6),SO(3)\otimes SO(3))$ and
$(SO(6),SO(5))$ were also studied there, but in the present paper we limit ourselves to the
cases given above. 

\section{Intertwining relations}

\label{sec:inte}

In this section we analyze the integrability
condition \eqref{int2} and its relation to certain intertwining
relations and the twisted Boundary Yang-Baxter (BYB) relation. First
we need to introduce some mathematical properties of the MPS.

We call an MPS given by $\omega$ irreducible, if there is no proper subspace
$V_A'\subset V_A$ which is an invariant 
sub-space for all $\omega_{a}$, $a=1,\dots,N$. All examples listed at the end of the
previous Section are irreducible.

We will make
use of the following simple statement, which is an analogue of 
Schur's lemma: 
\begin{lemma}
If an MPS is irreducible, then any matrix $U\in \text{End}(V_A)$ which commutes with all
$\omega_{a}$ is proportional to the identity matrix
acting on $V_A$.
\label{irredU}
\end{lemma}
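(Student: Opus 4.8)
The plan is to mimic the standard proof of Schur's lemma over an algebraically closed field. Suppose $U \in \text{End}(V_A)$ commutes with every $\omega_a$, $a = 1, \dots, N$. Since $V_A$ is a finite-dimensional complex vector space, $U$ has at least one eigenvalue $\lambda \in \complex$. Consider the operator $U - \lambda \mathbb{1}$, which also commutes with all the $\omega_a$ (the identity commutes with everything, and commutators are linear). The key observation is then that the kernel $V_A' = \ker(U - \lambda\mathbb{1})$ is a nonzero subspace of $V_A$.

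The main step is to check that $V_A'$ is invariant under all the $\omega_a$. Indeed, if $v \in V_A'$, so that $(U - \lambda\mathbb{1})v = 0$, then for any $a$ we have $(U - \lambda\mathbb{1})\omega_a v = \omega_a (U - \lambda\mathbb{1}) v = 0$, using that $\omega_a$ commutes with $U - \lambda\mathbb{1}$; hence $\omega_a v \in V_A'$. Thus $V_A'$ is an invariant subspace for all the $\omega_a$. It is nonzero because $\lambda$ is an eigenvalue, so by irreducibility of the MPS it cannot be a proper subspace, i.e., $V_A' = V_A$. But $V_A' = V_A$ means $(U - \lambda\mathbb{1})v = 0$ for all $v \in V_A$, that is, $U = \lambda\mathbb{1}$, which is the claim.

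There is essentially no serious obstacle here: the only ingredient beyond linear algebra is that $\complex$ is algebraically closed (guaranteeing the existence of an eigenvalue), together with the definition of irreducibility, which was set up precisely so that this argument goes through. One minor point worth making explicit is that irreducibility as defined forbids \emph{proper} invariant subspaces, so it allows the trivial subspace $\{0\}$ and the whole space $V_A$; the eigenspace $V_A'$ is excluded from being the former by being an eigenspace, and is therefore forced to be the latter. This completes the argument.
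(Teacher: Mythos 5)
Your proof is correct and follows essentially the same route as the paper's: take an eigenvalue $\lambda$ of $U$, observe that the corresponding eigenspace is invariant under all $\omega_a$ by commutativity, and conclude from irreducibility that it must be all of $V_A$. You simply spell out the details (invariance check, nonzero-ness of the eigenspace) that the paper's one-line argument leaves implicit.
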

\begin{proof}
Any matrix $U$ has at least one eigenvector with some eigenvalue
$\lambda$. The eigenspace associated to $\lambda$ is an invariant
subspace for all $\omega_{a}$ due to the commutativity, therefore it
has to be identical to the full $V_A$.
\end{proof}

Taking two irreducible MPS $\omega_A$ and $\omega_B$ we can construct a
new one simply by addition of the auxiliary vector spaces:
\begin{equation}
  \label{dsum}
  \omega_{A+B,a}(u)\equiv
  \begin{pmatrix}
    \omega_{A,a}(u) & 0\\
0    &  \omega_{B,a}(u) 
  \end{pmatrix},\qquad
  a=1\dots N.
\end{equation}
For the physical vectors \eqref{psidef} we get
\begin{equation}
  \ket{\Psi_{A+B}}=\ket{\Psi_A}+\ket{\Psi_B}
\end{equation}
due to the additivity of the trace in.

We say that an MPS is completely reducible, if it can be written as a
direct sum of irreducible pieces.
Alternatively, it means that if there is a common invariant
subspace $V_x\in V_A$ for all $\omega_j$, then there is a
complementary space $V_y$ with $V_A=V_x\oplus V_y$, which is invariant too. 
This property
means that if we have a triangular block diagonal form as
\begin{equation}
  \omega_j=
  \begin{pmatrix}
    E_j & F_j \\
    0 & G_j
  \end{pmatrix},
\end{equation}
then necessarily $F_j\equiv 0$. Note that if we are interested in the
physical vectors, then the $F_j$ are 
irrelevant, because they do not contribute to the trace. However,
below we will be interested in linear relations involving the full
matrices without dropping any off-diagonal blocks, therefore the
complete reducibility will be crucial\footnote{There can be solutions
  to the Boundary Yang-Baxter relations that are not completely
  reducible and involve a nonzero $F_j$ block. The  treatment of these
  cases would need more detailed investigations.}.

Let $\mathcal{A}$ be the matrix algebra generated by the set of
matrices $\{\omega_j\}$. The notions of irreducibility and complete
reducibility naturally carry over to $\mathcal{A}$.
There is a well known statement which usually goes under the name of 
  Burnside's Theorem on matrix algebras:
 the algebra $\mathcal{A}$ is irreducible, if and only if
 $\mathcal{A}=End(V_A)$.
 
For any MPS we can define the associated transfer matrix\footnote{This
transfer matrix can be defined for any MPS, regardless of integrability. It is not to be confused
with the various transfer matrices built from integrable Lax
operators and integrable boundary conditions. Nevertheless the
so-called Quantum Transfer Matrix defined later in \eqref{TABdef}
simplifies to $\mathcal{T}^2$ under a special choice of parameters.} acting on $V_A\otimes
V_A$ as
\begin{equation}
  \mathcal{T}=\sum_{a=1}^N \omega_a\otimes \omega_a^*.
\end{equation}
It is known from the theory of matrix product states that the
eigenvalues and eigenvectors of $\mathcal{T}$ describe the periodicity
properties, the half-chain reduced density matrix eigenvalues,  and
the correlation lengths of the
MPS \cite{mps-intro1}.

It is shown in \cite{mps-intro1}, that if the MPS is irreducible, then there is a
non-degenerate leading eigenvalue
$\lambda_{\text{max}}$ of $\mathcal{T}$ such that $\lambda_{\text{max}}\in \valos^+$,
and  $|\lambda_j|\le \lambda_{\text{max}}$ for all other $j$. 
We say that an irreducible MPS is pure, if $\lambda_{\text{max}}$ is non-degenerate also in
magnitude:   $|\lambda_j|< \lambda_{\text{max}}$ for all other $j$ (in \cite{mps-intro1}
this was called the C2 property).
If the MPS is not pure, then there are some eigenvalues of the form
$\lambda_j=\lambda_{\text{max}}e^{2i\pi p/q}$, and the
vector \eqref{psidef} can be
written as a sum of $q$-site invariant MPS with lower bond dimension.
The purity condition thus says that the MPS can not be
decomposed into simpler blocks even if we lift the requirement of
one-site invariance. 

In the theory of Matrix Product States it is a general important
question, how unique the actual matrix
representations are. Regarding this question we have the following
theorem \cite{mps-math-similarity-1,mps-math-similarity-2}\footnote{We are thankful to Ben
  Grossmann for his help in finding the relevant literature.}:
\begin{thm}
  If two sets of matrices $\{A_j\}$ and $\{B_j\}$ are completely reducible, and they
  produce the same MPS for all $L$:
\begin{equation}
  \label{egyenloek}
  \ket{\Psi_A(L)}=\ket{\Psi_B(L)},
\end{equation}
then there
is a simultaneous similarity transformation $S$ connecting the two sets as 
  \begin{equation}
    A_j=S^{-1}B_jS,\quad j=1,\dots,N.
  \end{equation}
  \label{simmm}
\end{thm}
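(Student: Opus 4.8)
The plan is to recast the statement as a question about representations of a free associative algebra and then appeal to the fact that a semisimple representation is pinned down, up to isomorphism, by its character. First I would note that, since the real-space basis vectors $\ket{j_L,\dots,j_1}$ are linearly independent, the hypothesis $\ket{\Psi_A(L)}=\ket{\Psi_B(L)}$ for all $L$ is equivalent to the family of scalar identities
\begin{equation}
  \label{momenteq}
  \text{tr}\!\left(A_{j_L}\cdots A_{j_1}\right)=\text{tr}\!\left(B_{j_L}\cdots B_{j_1}\right),\qquad L\ge 1,\quad j_1,\dots,j_L\in\{1,\dots,N\}.
\end{equation}
Introducing the free algebra $F=\complex\langle x_1,\dots,x_N\rangle$ and the representations $\rho_A,\rho_B$ of $F$ defined by $x_j\mapsto A_j$ on $V_A$ and $x_j\mapsto B_j$ on $V_B$, relation \eqref{momenteq} says, after extending by linearity, that the two characters agree: $\text{tr}\,\rho_A(f)=\text{tr}\,\rho_B(f)$ for every $f\in F$.

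Next I would bring in the complete-reducibility hypothesis, which means exactly that $V_A$ and $V_B$ are semisimple $F$-modules. Decomposing into isotypic components gives $\rho_A\cong\bigoplus_i m^A_i\,\sigma_i$ and $\rho_B\cong\bigoplus_i m^B_i\,\sigma_i$, the sum running over the finitely many pairwise inequivalent irreducible representations $\sigma_i$ occurring in either module, with multiplicities $m^A_i,m^B_i\ge 0$; equality of characters then reads $\sum_i m^A_i\chi_{\sigma_i}=\sum_i m^B_i\chi_{\sigma_i}$. The crux would be to show that the characters of pairwise inequivalent irreducibles are linearly independent. This is where Burnside's theorem enters: applied to the module $\bigoplus_i W_i$ with pairwise non-isomorphic simple summands $W_i$ (on which $\sigma_i$ acts), its density form shows that the image of $F$ in $\prod_i\text{End}(W_i)$ is all of $\prod_i\text{End}(W_i)$, the statement quoted above being the special case of a single block. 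Hence for each $i$ one can choose $f_i\in F$ acting as a rank-one idempotent on $W_i$ and as $0$ on every $W_\ell$ with $\ell\ne i$, so that $\chi_{\sigma_\ell}(f_i)=\delta_{i\ell}$; evaluating the character identity at $f_i$ yields $m^A_i=m^B_i$ for all $i$.

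Once the multiplicities agree, $\dim V_A=\dim V_B$ and $\rho_A\cong\rho_B$ as $F$-modules. An invertible intertwiner $\phi\colon V_B\to V_A$ realizing this isomorphism satisfies $\phi B_j=A_j\phi$ for all $j$, so $S\equiv\phi^{-1}$ gives $A_j=S^{-1}B_jS$, as claimed. I expect the only genuine obstacle to be the linear-independence-of-characters step; everything else is formal, and the tool needed for that step (Burnside together with Jacobson density) is already available in the paper. A more computational alternative, closer to \cite{mps-math-similarity-1,mps-math-similarity-2}, is to induct on the bond dimension: peel off a simple top component of $\{A_j\}$, use Burnside's theorem to match it with an isomorphic component of $\{B_j\}$ and build a partial intertwiner, then recurse on the (still completely reducible) complementary blocks; there the delicate point is to keep the partial intertwiners mutually compatible across all indices $j$ and all blocks simultaneously.
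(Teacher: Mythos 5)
Your argument is correct in substance, but be aware that the paper does not actually prove Theorem \ref{simmm}: it delegates the proof to Theorem 1 of \cite{mps-math-similarity-2} (phrased there via representations of semigroups) and to Corollary 2.7 of \cite{mps-math-similarity-1}, so there is no in-paper argument to compare against step by step. Your route --- translate equality of the MPS into equality of traces of all words, regard the two families as semisimple modules over the free algebra $\complex\langle x_1,\dots,x_N\rangle$, and deduce equality of isotypic multiplicities from the linear independence of characters of pairwise inequivalent irreducibles, established by choosing separating elements via Jacobson density applied to the multiplicity-free sum $\bigoplus_i W_i$ --- is the standard proof that a finite-dimensional semisimple representation in characteristic zero is determined by its character, and it is essentially what the cited references prove in semigroup language. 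It has the merit of being self-contained with tools the paper already invokes (Burnside/density). What it does not recover is the quantitative refinement the paper quotes from \cite{mps-math-similarity-1}, namely that agreement of the MPS at a single sufficiently large length $L^*$ already suffices; that requires Shirshov-type bounds and is not needed for the paper's use of the theorem.

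One step should be tightened. The hypothesis only yields $\mathrm{tr}\,\rho_A(f)=\mathrm{tr}\,\rho_B(f)$ for $f$ in the linear span of \emph{nonempty} words, i.e.\ on the augmentation ideal of $F$, not on all of $F$ as you assert: the case $f=1$ would say $\dim V_A=\dim V_B$, which is not among the hypotheses. Your separating elements $f_i$ must therefore be taken inside the augmentation ideal. This is possible for every irreducible constituent except the one-dimensional ``null'' representation on which all $x_j$ act as zero, whose multiplicity the traces of nonempty words cannot detect; for instance the $1\times 1$ zero matrix and the $2\times 2$ zero matrix (with $N=1$) are both completely reducible, have equal traces of all nonempty words, and are not similar. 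So either one adds the assumption $\dim V_A=\dim V_B$ --- automatic in the paper's application, where both dressed families \eqref{ABdef} act on the common space $V_0\otimes V_A$ --- or one excludes the null constituent. This is as much a gap in the theorem as stated as in your proof, but your write-up should not claim the character identity on all of $F$.
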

The proof of the theorem is given in \cite{mps-math-similarity-2}
(Theorem 1) using the representation theory of semi-groups. The theorem
also follows from Corollary 2.7 of \cite{mps-math-similarity-1}. This
latter paper also shows
that it is
enough to require the equality of MPS at some large enough length
$L^*$, the precise value of which is not important for our purposes.
A slightly different formulation of the same statement can
be extracted also from  \cite{mps-intro1}, but the theorem in
this form is not stated there.

It is easy to see that the complete reducibility is indeed needed. Let us
consider for example the simple case with $N=1$ and the two matrices
being
\begin{equation}
  A_1=
  \begin{pmatrix}
    1 & 0 \\ 0 & 1
  \end{pmatrix}\qquad
  B_1=
  \begin{pmatrix}
    1 & \alpha \\ 0 & 1
  \end{pmatrix}
\end{equation}
with some constant $\alpha\ne 0$. The resulting trace conditions
$\text{Tr}A_1^L=\text{Tr}B_1^L$ are satisfied for all $L$, but there
is no similarity transformation connecting $A$ and $B$. The reason for
this is the invariant subspace and the triangular structure, and that the
algebra generated by $B_1$ is not completely reducible.

\bigskip

Now we consider the integrability condition \eqref{int2}. Both sides of the relation \eqref{int2} can be described by a
``dressed'' MPS, where the dressing is caused by the action of the two
transfer matrices. This is made explicit by defining
the corresponding MPS
matrices $A_j$ and $B_j$ acting on $V_0\otimes V_A$, where $V_0$ is
the auxiliary space of the transfer matrix and $V_A$ is the space for
the action of $\omega_j$. Here $A_j$ corresponds to the original
dressing and $B_j$ to the reflected dressing of the MPS.
For a physical space $V_1$ let us decompose
the $R$-matrix as
\begin{equation}
  R_{10}(u)=\sum_{ab} E_{ab} \otimes \mathcal{L}_{ab}(u),
\end{equation}
where $E_{ab}$ are the basis matrices acting on $V_1$ and the matrices
$\mathcal{L}_{ab}(u)$ act on the auxiliary space. Then we have
\begin{equation}
  \label{ABdef}
  \begin{split}
    A_j&=\sum_k \mathcal{L}_{jk}(u)\otimes \omega_k \\
    B_j&=\sum_k \mathcal{L}^T_{jk}(u)\otimes \omega_k, \\
  \end{split}
\end{equation}
where $T$ denotes simple transpose for each $j,k$. 

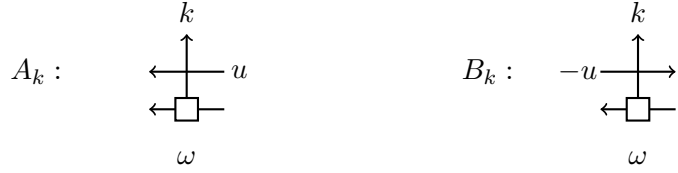
\begin{figure}
  \centering
  \begin{tikzpicture}
       \foreach \x [count=\n]in {0,6}{
    \begin{scope}[xshift=\x cm]
   \draw [<-,thick] (-0.5,1) to (-0.15,1);
           \draw [thick] (0.5,1) to (0.15,1);
        \draw [thick] (-0.15,0.85) rectangle (0.15,1.15);
        \draw [->,thick] (0,1.15) to  (0,2);
       \node at (0,0.35) {$\omega$};
        \node at (0,2.3) {$k$};
          \end{scope}}
        \draw [<-,thick] (-0.5,1.5) to (0.5,1.5);
        \draw [->,thick] (5.5,1.5) to (6.5,1.5);
        \node at (-2,1.5) {$A_k:$};
        \node at (4,1.5) {$B_k:$};
        \node at (0.7,1.5) {$u$};
        \node at (5.2,1.5) {$-u$};
      \end{tikzpicture}
  \caption{Graphical representation of the dressed matrices defined in
    \eqref{ABdef}, which act on the product space $V_0\otimes V_A$. The outgoing index $k$ stands for the physical 
    degree of freedom. The dressing of the $B_k$ matrices includes a
    partial transpose, this is denoted by the reversed arrow on the
    horizontal line. The spectral parameters associated to the
    horizontal lines are $u$ and $-u$, whereas the vertical line
    carries 0 rapidity. Here and in the following the
    $R$-matrices at the crossings are such that their argument is the
    incoming rapidity coming from the right minus the rapidity from
    the left. Therefore, both crossings above are described by $R(u)$,
    but with a different orientation.}
  \label{fig:AB}
\end{figure}

The $\omega$ matrices satisfy the group invariance \eqref{group1}
under $\mathcal{G}'\in \mathcal{G}$.
It follows that the matrices $A_j$ act on the representation
$\Lambda_0\otimes \Lambda_\omega$ of $\mathcal{G'}$, where $\Lambda_0$ is the
restriction of the defining representation of $\mathcal{G}$ to $\mathcal{G}'$.
On the other hand,
the $B_j$ act on
$\bar{\Lambda}_0\otimes \Lambda_\omega$ due to the partial transpose and the group
property \eqref{Rgroup3}.

The condition
\eqref{int2} says that $\{A_j\}$ and $\{B_j\}$ generate the same
MPS for all $L$:
\begin{equation}
  \label{egyenloek2}
  \ket{\Psi_A(L)}=\ket{\Psi_B(L)}.
\end{equation}
This implies:
\begin{thm}
  \label{Kint}
  If  the dressed MPS  $\{A_j\}$ and $\{B_j\}$ are completely
  reducible, then
  there exists an invertible matrix $K(u)$
which is a simultaneous intertwiner between $\{A_j\}$ and
$\{B_j\}$:
   \begin{equation}
    \label{intertwK}
    A_jK(u)=K(u)B_j,\quad j=1,\dots,N.
  \end{equation}
  The matrix $K(u)$ acts on $V_0\otimes V_A$, and it is a function of  the
  rapidity parameter used in the dressing \eqref{ABdef}.  
\end{thm}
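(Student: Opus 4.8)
The plan is to reduce the statement to Theorem \ref{simmm}. The first step is to recognize that the integrability condition \eqref{int2}, written in the form \eqref{egyenloek2}, is precisely the hypothesis of Theorem \ref{simmm} applied to the two families $\{A_j\}$ and $\{B_j\}$. Concretely, I would check that acting with the transfer matrix $t(u)$ on $\ket{\Psi_\omega}$ produces the MPS with building block $A_j$, and acting with $\tilde t(u)$ produces the MPS with building block $B_j$. This is the standard manoeuvre of absorbing the auxiliary space $V_0$ of the (space-reflected) monodromy matrix into the auxiliary space $V_A$ of the MPS: with $R_{10}(u)=\sum_{ab}E_{ab}\otimes\mathcal{L}_{ab}(u)$ as in \eqref{ABdef}, the trace $\text{Tr}_0$ defining $t(u)$ merges with the trace $\text{tr}_A$ in \eqref{psidef} into a single trace over $V_0\otimes V_A$, and the reversed order of $R$-matrices in $\tilde t(u)$ together with transposition in $V_0$ accounts for the $\mathcal{L}^T$ in the definition of $B_j$ (this is exactly what Fig.~\ref{fig:AB} encodes). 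Hence \eqref{int2} says precisely $\ket{\Psi_A(L)}=\ket{\Psi_B(L)}$ for every $L$.

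Given this identification, the second step is immediate. Since $\{A_j\}$ and $\{B_j\}$ are completely reducible by assumption, Theorem \ref{simmm} supplies a single invertible matrix $S$ on $V_0\otimes V_A$ with $A_j=S^{-1}B_jS$ for all $j=1,\dots,N$. Setting $K(u)=S^{-1}$ — which is invertible, and which inherits the $u$-dependence of the dressed blocks $A_j(u)$, $B_j(u)$ since it is built from them — and multiplying the relation on the right by $S^{-1}$ gives $A_jK(u)=S^{-1}B_j=K(u)B_j$, i.e. \eqref{intertwK}. (One could just as well absorb the transformation on the other side; the asymmetry between $A$ and $B$ is purely a matter of convention.)

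I do not anticipate a genuine obstacle, since the substantive work is packaged inside Theorem \ref{simmm}. The one point that deserves care is the role of complete reducibility: as the triangular example following Theorem \ref{simmm} shows, it is truly needed, and without it one would at best obtain an intertwiner after quotienting out the non-semisimple part. So the cleanest write-up keeps complete reducibility as a standing hypothesis (as stated) and simply invokes Theorem \ref{simmm}. The only other thing worth spelling out, and it is routine, is the precise identification $t(u)\ket{\Psi_\omega}=\ket{\Psi_A}$ and $\tilde t(u)\ket{\Psi_\omega}=\ket{\Psi_B}$, in particular the placement of the partial transpose producing the $B_j$.
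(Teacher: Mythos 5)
Your proposal is correct and follows exactly the paper's route: the paper also establishes that the integrability condition \eqref{int2} is equivalent to the statement \eqref{egyenloek2} that $\{A_j\}$ and $\{B_j\}$ generate the same MPS for all $L$, and then its proof of Theorem \ref{Kint} consists of the single sentence that the result follows immediately from Theorem \ref{simmm}. You merely spell out the absorption of $V_0$ into the MPS auxiliary space and the identification $K(u)=S^{-1}$, which the paper leaves implicit.
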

\begin{proof}
This follows immediately from Theorem \ref{simmm}.
\end{proof}
The intertwiner above can be identified
with the object called ``$K$-matrix'' from the theory of boundary
integrability. Below we will show that (given some conditions) it satisfies the twisted
Boundary Yang-Baxter relation. $K(u)$ intertwines the
representations $\bar{\Lambda}_0\otimes \Lambda_\omega$ and $\Lambda_0\otimes
\Lambda_\omega$ of $\mathcal{G}'$. The representation changing property shows
that this object always corresponds to ``soliton non-preserving''
boundary conditions \cite{Delius-non-preserving,doikou-non-preserving}.

It is important to analyze the reducibility conditions for Theorem \ref{Kint} and
their implications.
First of all, in those cases when the dressed MPS are irreducible, the $K$-matrices
are unique up to an overall phase. If there are invariant subspaces,
but the MPS is completely reducible, then the
normalization factors of the individual blocks can be chosen
independently, and the $K$-matrices are thus not unique.

The above theorem requires complete reducibility, but this does not automatically hold for the dressed MPS
\eqref{ABdef}.
A counter-example in the $SU(N)$-invariant case is simply the reference state, which can be
described by a one-dimensional (scalar) MPS given by
\begin{equation}
  \omega_1=1,\qquad \omega_k=0,\text{ for } k\ge 2
\end{equation}
Let us denote by $e_k$, $k=1\dots N$ the standard basis vectors. We
can see that $span\{e_1\}$ and $span\{e_2,e_3,\dots\}$
are invariant subspaces for $\{A_j\}$ and $\{B_j\}$, respectively, but
the complements are not. In other words, both $\{A_j\}$ and $\{B_j\}$
have a non-trivial triangular structure.
However, even in this case there is a
non-zero $K$-matrix satisfying \eqref{intertwK}, which is given
explicitly as $K_{11}(u)=1$, $K_{jk}(u)=0$ for $j>1$ or $k>1$. This
$K$-matrix is not invertible, and this reflects the lack of the
complete reducibility.

For these cases we have following theorem:
\begin{thm}
  If  the original MPS is built from self-adjoint matrices
  $\omega_j=\omega_j^\dagger$, then there is at least one solution to the intertwining
  relation \eqref{intertwK} with a non-zero $K(u)$.
\end{thm}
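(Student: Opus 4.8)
The plan is to recast \eqref{intertwK} as the existence of a non-zero Hermitian intertwiner and then to produce one. First I would exploit the self-adjointness: because the entries of $R(u)$ are real for real $u$ and $\omega_k=\omega_k^\dagger$, the dressed matrices \eqref{ABdef} satisfy $B_j(u)=A_j(u)^\dagger$ for real $u$ (and $B_j(u)=A_j(\bar u)^\dagger$ in general), since $A_j^\dagger=\sum_k\mathcal{L}_{jk}(u)^\dagger\otimes\omega_k^\dagger=\sum_k\mathcal{L}_{jk}(u)^T\otimes\omega_k=B_j$. Hence, for real $u$, \eqref{intertwK} becomes $A_j(u)K=KA_j(u)^\dagger$ for all $j$; taking $\dagger$ of this equation shows that $K^\dagger$ is again a solution, so one may look for a Hermitian $K$. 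Moreover, the set of $u$ admitting a non-zero $K(u)$ is the common zero locus of the maximal minors of the $u$-dependent linear map $K\mapsto\big(A_j(u)K-KB_j(u)\big)_j$, hence a Zariski-closed subset of the $u$-line; it therefore suffices to produce a non-zero solution for real $u$, or indeed for infinitely many $u$.

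Next I would write down a solution. At $u=0$ one has $A_j(0)=\sum_k E_{kj}\otimes\omega_k$, and a short computation shows that
\[
  K_0=\sum_{a,b=1}^{N}E_{ab}\otimes\omega_a\omega_b=\tilde\Omega\,\tilde\Omega^\dagger ,\qquad \tilde\Omega\, v=\sum_a e_a\otimes\omega_a v ,
\]
solves $A_j(0)K_0=K_0A_j(0)^\dagger$: the identity collapses to associativity of products of the $\omega$'s together with $\omega_a=\omega_a^\dagger$, and $K_0=\tilde\Omega\tilde\Omega^\dagger$ is non-zero whenever the MPS is non-trivial. For any real $u$ at which the dressed MPS $\{A_j(u)\}$ is irreducible — hence completely reducible, and likewise $\{B_j(u)\}$ because $B_j=A_j^\dagger$ — Theorem \ref{Kint}, whose hypothesis \eqref{egyenloek2} is precisely the integrability of the state, already provides an invertible $K(u)$; since irreducibility is a Zariski-open condition on $u$, if it holds for a single value it holds on a dense set, and the closedness observation above then yields a non-zero $K(u)$ for every $u$.

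The remaining, and hardest, case is that of states for which $\{A_j(u)\}$ is never irreducible (the reference state of Section \ref{sec:inte} is such an example). Here I would use integrability more forcefully: by the trace identity underlying Theorem \ref{simmm}, \eqref{egyenloek2} forces $\{A_j(u)\}$ and $\{B_j(u)\}=\{A_j(u)^\dagger\}$ to share the same composition factors, so the composition factors of the dressed module are stable under conjugate duality; one then extracts a self-conjugate-dual simple factor $S$, realises it simultaneously as a submodule of $(V_0\otimes V_A,\{A_j(u)\})$ and as a quotient of $(V_0\otimes V_A,\{B_j(u)\})$, and takes $K(u)$ to be the composition $V_0\otimes V_A\twoheadrightarrow S\hookrightarrow V_0\otimes V_A$ twisted by a non-degenerate $\{A_j|_S\}$-self-adjoint form on $S$. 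The obstacle lies exactly here: when the dressed MPS fails to be completely reducible one must control how the triangular blocks of $\{A_j(u)\}$ pair up under Hermitian conjugation and show that the simultaneous self-adjointness of the $\omega_j$ prevents the resulting $K(u)$ from collapsing into an off-diagonal block (compare the footnote on non-completely-reducible $K$-matrices). The explicit $K_0$ disposes of $u=0$, and of all $u$ when the auxiliary space is small enough that the relevant $\omega$-products commute (as for the reference state, where $K_0=E_{11}$ works for every $u$), but in general the interplay between composition factors and the socle/top filtration is what has to be pinned down.
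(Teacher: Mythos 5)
Your reduction to real $u$ via $B_j=A_j^\dagger$, the remark that one may seek a Hermitian $K$, the Zariski-closedness of the solvable locus, and the explicit $K_0=\sum_{a,b}E_{ab}\otimes\omega_a\omega_b$ at $u=0$ (which is just the initial condition of Theorem \ref{psi0}) are all correct and reproduce the paper's starting point. But the proof is not complete: the two situations you actually settle are $u=0$ alone (a single point is itself Zariski-closed, so it propagates to nothing) and the case where the dressed MPS is irreducible for at least one $u$ --- and in the latter case Theorem \ref{Kint} already gives an invertible $K(u)$, so the present theorem adds nothing there. The theorem is stated precisely to cover the remaining case, where the dressed families are never completely reducible (the reference state being the motivating example), and that is exactly the case you leave as an acknowledged obstacle.

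The paper closes this case with a short argument that your ``submodule of $A$ / quotient of $B$'' sketch is circling but does not land on. Take any minimal (hence irreducible) invariant subspace $V_x$ of $\{A_j\}$ and an orthonormal basis adapted to $V_x\oplus V_x^\perp$. In this basis the $A_j$ are block upper triangular; since $B_j=A_j^\dagger$, the $B_j$ are automatically block lower triangular. In other words, the orthogonal complement performs for free the sub/quotient pairing you were trying to extract from the socle/top filtrations: the same $V_x$ carries the irreducible diagonal block $A_j^{xx}$ of $\{A_j\}$ and the diagonal block $(A_j^{xx})^\dagger$ of $\{B_j\}$. Only diagonal blocks contribute to the traces, so Theorem \ref{simmm} applied blockwise yields an invertible $K_x$ on $V_x$ with $A_j^{xx}K_x=K_x(A_j^{xx})^\dagger$, and then $K(u)=P_xK_xP_x$ satisfies \eqref{intertwK} on all of $V_0\otimes V_A$: indeed $A_jP_x=P_xA_jP_x$ and $P_xB_j=P_xB_jP_x$, so $A_jK=P_xA_j^{xx}K_xP_x=P_xK_x(A_j^{xx})^\dagger P_x=KB_j$. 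This works for every real $u$, and analyticity --- or your own Zariski argument, now applied to an infinite set --- extends it to complex $u$. Two caveats you correctly sensed: this step, like your appeal to Theorem \ref{Kint}, uses the equality of the dressed traces, i.e.\ the integrability condition \eqref{egyenloek2}, which is implicit in the statement; and the blockwise use of Theorem \ref{simmm} tacitly matches the factor on $V_x$ with its own adjoint rather than with the adjoint of some other composition factor, so the point where your proposal stalls is also where the paper's own proof is most compressed.
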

\begin{proof}
Let us take a real $u$ parameter. The definition \eqref{ABdef}
together with the self-adjointness property implies
\begin{equation}
  A_j=B_j^\dagger, \qquad j=1,\dots,N.
\end{equation}
If there is at least one irreducible subspace $V_x$
for the set $\{A_j\}$, then $\{A_j\}$ can always be brought into an upper triangular block
form, where the first block corresponds to $V_x$. The self adjointness
implies that in this basis the set $\{B_j\}$ will have lower
triangular block form. The diagonal block corresponding to $V_x$ is already irreducible, and
the complementary block is either irreducible or can be split up into
 irreducible blocks after further basis transformations.
Only the diagonal blocks contribute to the
trace, therefore Theorem \eqref{simmm} implies a similarity transformation for each diagonal block
separately. However, this does not mean that the full sets $\{A_j\}$
and $\{B_j\}$ are similar. Nevertheless, we can find an intertwiner as
\begin{equation}
  K(u)=P_x K_x(u) P_x,
\end{equation}
where $P_x$ is the projector onto $V_x$ and $K_x(u)$ is the invertible
similarity transformation within $V_x$. If $V_x$ is a proper subspace
then $K(u)$ is not invertible, nevertheless it satisfies the linear
relation \eqref{intertwK}.

The intertwining relation is analytic in $u$, therefore the
solution $K(u)$  can
be extended into the complex plane. 
\end{proof}
This theorem obviously covers the case of the reference state, which
was a counter-example to complete reducibility.

The theorem can be extended to those cases when $span\{\omega_j\}$ is
self-adjoint, but the matrices themselves are not. This includes all
cases listed at the end of the previous Section. We thus conclude that
there is a non-zero intertwiner in all of these cases.

The uniqueness
of the intertwiner is an important question. 
We numerically investigated the dressed matrices obtained from some
examples of the MPS
listed at the end of the previous Section. 
It was found that in all cases the dressed MPS are irreducible for generic
values of $u$. The details of this numerical procedure are explained in 
Appendix \ref{sec:C1}. We have thus established that the intertwiner
is unique for these MPS, but it would be desirable to obtain an
analytic proof too.

\bigskip

Let us write the $K$-matrix introduced above as
\begin{equation}
  \label{Kpsi}
K(u)=\sum_{a,b}E_{ab}\otimes \psi_{ab}(u),
\end{equation}
where $E_{ab}$ are the elementary matrices acting on $V_0$ and $\psi_{ab}(u)$, $a,b=1\dots N$ are
matrices acting on $V_A$. The collection of matrices
$\{\psi_{ab}(u)\}_{a,b=1\dots N}$ will also be denoted simply as
$\psi(u)$ and it will be called the
two-site block. It can be considered as an element of $\complex^N\otimes
\complex^N\otimes End(V_A)$.
Later in this Section we will use $\psi(u)$ to build inhomogeneous two-site
invariant MPS. A pictorial representation is given on the left hand
side of Figure \ref{fig:init}.

    \begin{figure}
  \centering
  \begin{tikzpicture}

    \begin{scope}[shift={(-5,0)}]
  \draw [<-,thick] (0.5,0.65) to (0.7,0.65);
    \draw [thick] (2.3,0.65) to (2.5,0.65);
    \draw [->,thick] (1,0.8) to (1,1);
       \draw [->,thick] (2,0.8) to (2,1);
       \draw [thick] (0.7,0.5) rectangle (2.3,0.8);
       \node at (1.5,0) {$\psi(u)$};
            \node at (1.3,1) {$u$};
      \node at (2.3,1) {$-u$};
\end{scope}

    \draw [<-,thick] (0.5,0.65) to (0.7,0.65);
    \draw [thick] (2.3,0.65) to (2.5,0.65);
    \draw [->,thick] (1,0.8) to (1,1);
       \draw [->,thick] (2,0.8) to (2,1);
       \draw [thick] (0.7,0.5) rectangle (2.3,0.8);
\node at (1.5,0) {$\psi(0)$};
       
\node at (3,0.65) {$\sim$};

    \begin{scope}[shift={(-1,0)}]
   \draw [->,thick] (5,0.8) to (5,1);
       \draw [->,thick] (6,0.8) to (6,1);
\draw [thick] (4.85,0.5) rectangle (5.15,0.8);
\draw [thick] (5.85,0.5) rectangle (6.15,0.8);
\draw [<-,thick] (4.5,0.65) to (4.85,0.65);
\draw [thick] (5.15,0.65) to (5.85,0.65);
\draw [thick] (6.15,0.65) to (6.5,0.65);
\node at (5,0) {$\omega$};
\node at (6,0) {$\omega$};
\end{scope}
  \end{tikzpicture}
  \caption{Pictorial representation of the two-site block $\psi(u)$
    and its initial condition (see Theorem \ref{psi0}).
}
    \label{fig:init}
\end{figure}
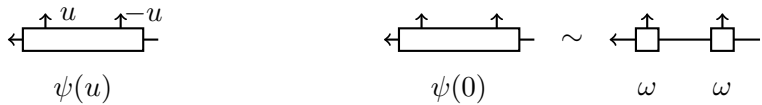

Working out the indices the condition
\eqref{intertwK} can be written as
\begin{equation}
  \label{gyokBYB00}
   \check R_{ab}^{de}(u)  \omega_d\psi_{ec}(u) =
\check R_{bc}^{de}(u)  \psi_{ad}(u) \omega_e,
\end{equation}
where $a,b,c,d,e=1\dots N$ are the physical indices and summation over
$d,e$ is understood. A detailed derivation of \eqref{gyokBYB00} is
presented in Appendix \ref{app:basszus}. This relation can also be written in a short-hand notation as
\begin{equation}
  \label{gyokBYB}
  \check R_{23}(u)  (\omega\cdot \psi(u))=
\check R_{12}(u)  (\psi(u)\cdot \omega),
\end{equation}
which is a relation in $V_3\otimes V_2\otimes V_1\otimes End(V_A)$.
Here and in the following we always understand that the $\check R$
matrices act on some of the physical indices (and the numeric indices
label the vector spaces), and the actual elements of the
equations are matrices acting on $V_A$.
A pictorial
representation is given in Fig. \ref{fig:gyokBYBmps}, whereas the
interpretation of the intertwiner relation \eqref{intertwK} is shown
in Fig. 
\ref{fig:intertw1}. We call \eqref{gyokBYB} the ``square root
relation'' (abbreviated as sq.r.r.), because it involves half the
steps of the so-called Boundary Yang-Baxter relation, to be introduced
below. We stress that \eqref{gyokBYB} is only a short notation for the
full relation \eqref{gyokBYB00} containing all indices.

\begin{thm}
 For invertible $\omega_j$ the sq.r.r. implies the initial condition 
 \begin{equation}
   \label{init}
\psi_{jk}(0) = \omega_j \omega_k
\end{equation} 
\label{psi0}
up to a scalar factor.
\end{thm}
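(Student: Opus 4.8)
The plan is to evaluate the square root relation \eqref{gyokBYB00} at the distinguished point $u=0$, where by the initial condition \eqref{Rinit} we have $\check R(0)=\mathbb{1}$, and then to extract the product structure of $\psi(0)$ by a two-fold ``separation of variables'' argument that uses the invertibility of the $\omega_j$. Concretely, setting $u=0$ in \eqref{gyokBYB00} and using $\check R_{ab}^{de}(0)=\delta_a^d\delta_b^e$ collapses that relation into the purely algebraic identity
\begin{equation}
  \omega_a\,\psi_{bc}(0) = \psi_{ab}(0)\,\omega_c,\qquad a,b,c=1,\dots,N,
\end{equation}
and it is this identity that I would analyse.

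First, multiplying on the left by $\omega_a^{-1}$ and on the right by $\omega_c^{-1}$ gives $\omega_a^{-1}\psi_{ab}(0)=\psi_{bc}(0)\,\omega_c^{-1}$; the left-hand side is independent of $c$ and the right-hand side is independent of $a$, so both equal a matrix $M_b\in\text{End}(V_A)$ depending on $b$ only. Hence $\psi_{ab}(0)=\omega_a M_b = M_a\omega_b$ for all $a,b$, and the second equality reads $\omega_a M_b = M_a\omega_b$. Applying the same trick once more, $\omega_a^{-1}M_a = M_b\omega_b^{-1}$ is independent of both indices, so it equals a fixed matrix $C$, whence $M_b = C\omega_b = \omega_b C$ and therefore $\psi_{jk}(0) = \omega_j C\omega_k$ with $C$ commuting with every $\omega_a$.

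Finally, by the Schur-type statement of Lemma \ref{irredU}, a matrix in $\text{End}(V_A)$ commuting with all the $\omega_a$ of an irreducible MPS is a multiple of the identity, so $C=c\,\mathbb{1}$ and $\psi_{jk}(0)=c\,\omega_j\omega_k$, which is \eqref{init} up to the scalar $c$. I expect the only delicate points to be bookkeeping: fixing the index placement so that $\check R(0)=\mathbb{1}$ really does reduce \eqref{gyokBYB00} to the displayed algebraic identity, and observing that the step from ``$C$ central'' to ``$C$ scalar'' genuinely invokes irreducibility of the MPS — without it one obtains only $\psi_{jk}(0)=\omega_j C\omega_k$ for some $C$ in the commutant of $\{\omega_j\}$, which is the natural replacement of the statement in the merely completely reducible case.
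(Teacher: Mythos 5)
Your proposal is correct and follows essentially the same route as the paper's proof: evaluating the square root relation at $u=0$ via $\check R(0)=\mathbb{1}$, peeling off one $\omega$ to define an auxiliary matrix (your $M_b$ is the paper's $\tilde\omega_j=\omega_i^{-1}\psi_{ij}$), repeating the separation-of-indices step to produce a matrix in the commutant (your $C$ is the paper's $U$), and invoking Lemma \ref{irredU}. Your closing remark that irreducibility is genuinely needed to pass from ``$C$ central'' to ``$C$ scalar'' is accurate, and correctly identifies an implicit hypothesis beyond the stated invertibility.
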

 \begin{proof}
From \eqref{gyokBYB00} at $u=0$ we get
$$\omega_i  \psi_{j k}  = \psi_{i j} \omega_k \qquad \forall i,j,k$$ 
So we can write 
$$  
\psi_{j,k} = ((\omega_i)^{-1} \psi_{i,j} )   \omega_k = \tilde{\omega}_{j} \omega_k \,,
$$
where we have defined $\tilde{\omega}_j = (\omega_i)^{-1} \psi_{i,j} $, which has to be independent of $i$.

Substituting back we see that
 $$
 \tilde{\omega}_j {\omega_j}^{-1}  = (\omega_i)^{-1} \tilde{\omega}_{i}  \,,
 $$
 which is therefore independent both of $i,j$. Introducing $U=
 (\omega_i)^{-1} \tilde{\omega}_{i}$
 we have
 \begin{equation} 
\tilde{\omega}_j = U \omega_j  = \omega_j  U  \,, 
 \end{equation}   
 Using Lemma \eqref{irredU} $U$ is proportional to identity.
 Up to a re-scaling of $\psi(u)$ we conclude $\tilde{\omega}_j = \omega_j$, thus completing the proof.
\end{proof}

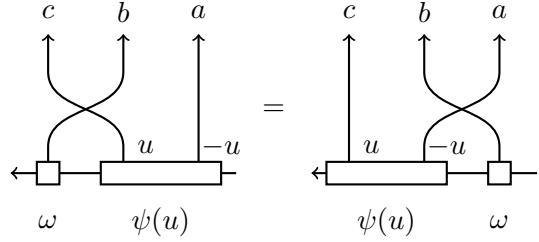
\begin{figure}
  \centering
  \begin{tikzpicture}
    \draw [<-,thick] (-0.5,0.65) to (-0.15,0.65);
    \draw [thick] (0.15,0.65) to (0.7,0.65);
    \draw [thick] (2.3,0.65) to (2.5,0.65);
      \draw [->,thick] (0,0.8) to  (0,1) to [out=90,in=-90]  (1,2) to
      (1,2.5);
      \draw [thick] (-0.15,0.5) rectangle (0.15,0.8);
\draw [thick] (0.7,0.5) rectangle (2.3,0.8);
    \draw [->,thick] (1,0.8) to (1,1) to [out=90,in=-90]  (0,2) to (0,2.5);
    \draw [->,thick] (2,0.8) to (2,2.5);
    \node at (0,0) {$\omega$};
    \node at (1.5,0) {$\psi(u)$};
    \node at (2.3,1) {$-u$};
    \node at (1.3,1) {$u$};
    \node at (0,2.8) {$c$};
    \node at (1,2.8) {$b$};
    \node at (2,2.8) {$a$};

\node at (3,1.5) {$=$};

    \draw [<-,thick] (3.5,0.65) to (3.7,0.65);
    \draw [thick] (5.3,0.65) to (5.85,0.65);
    \draw [thick] (6.15,0.65) to (6.5,0.65);
      \draw [thick] (5.85,0.5) rectangle (6.15,0.8);
\draw [thick] (3.7,0.5) rectangle (5.3,0.8);
 \draw [->,thick] (6,0.8) to  (6,1) to [out=90,in=-90]  (5,2) to    (5,2.5);
    \draw [->,thick] (5,0.8) to (5,1) to [out=90,in=-90]  (6,2) to (6,2.5);
    \draw [->,thick] (4,0.8) to (4,2.5);
\node at (4.5,0)  {$\psi(u)$};
\node at (6,0) {$\omega$};
   \node at (5.3,1) {$-u$};
   \node at (4.3,1) {$u$};
      \node at (4,2.8) {$c$};
    \node at (5,2.8) {$b$};
    \node at (6,2.8) {$a$};
  \end{tikzpicture}
  \caption{A pictorial interpretation of the ``square root relation''
    \eqref{gyokBYB}, which describes the exchange of the two-site
    block $\psi(u)$ and the one-site block $\omega$. This is a relation in
    $V_3\otimes V_2\otimes V_1\otimes End(V_A)$ and the outgoing indices
    $c,b,a$ describe the basis states in $V_3\otimes V_2\otimes
    V_1$. Fixing $c,b,a$ we obtain matrices acting on $V_A$. The local
    $\check R$ matrices acting at the crossings are defined such that 
  their argument is always the rapidity coming from the right minus the
  rapidity coming from the left. Thus we get an action of $\check
  R(u)$ on both sides, but on different vector spaces.}
  \label{fig:gyokBYBmps}
\end{figure}

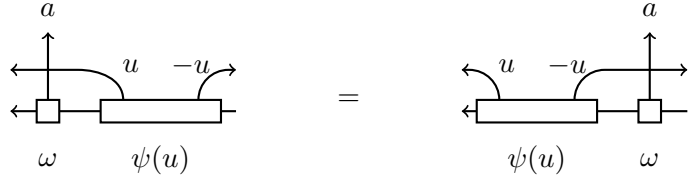
\begin{figure}
  \centering
  \begin{tikzpicture}
    \draw [<-,thick] (-0.5,0.65) to (-0.15,0.65);
    \draw [thick] (0.15,0.65) to (0.7,0.65);
    \draw [thick] (2.3,0.65) to (2.5,0.65);
      \draw [->,thick] (0,0.8) to  (0,1.7);
      \draw [thick] (-0.15,0.5) rectangle (0.15,0.8);
\draw [thick] (0.7,0.5) rectangle (2.3,0.8);
    \draw [->,thick] (1,0.8) to [out=90,in=0]  (0.4,1.2) to (-0.5,1.2);
    \draw [->,thick] (2,0.8) to [out=90,in=180] (2.4,1.2) to (2.5,1.2);
    \node at (0,0) {$\omega$};
    \node at (0,2) {$a$};
    \node at (1.5,0) {$\psi(u)$};
    \node at (1.9,1.25) {$-u$};
    \node at (1.1,1.25) {$u$};

\node at (4,0.8) {$=$};

\begin{scope}[xshift = 2 cm]
      \draw [<-,thick] (3.5,0.65) to (3.7,0.65);
    \draw [thick] (5.3,0.65) to (5.85,0.65);
    \draw [thick] (6.15,0.65) to (6.5,0.65);
      \draw [thick] (5.85,0.5) rectangle (6.15,0.8);
\draw [thick] (3.7,0.5) rectangle (5.3,0.8);
 \draw [->,thick] (6,0.8) to    (6,1.7);
    \draw [->,thick] (5,0.8) to [out=90,in=180]  (5.4,1.2) to (6.5,1.2);
    \draw [->,thick] (4,0.8) to [out=90,in=0] (3.6,1.2) to (3.5,1.2);
\node at (4.5,0)  {$\psi(u)$};
\node at (6,0) {$\omega$};
   \node at (6,2) {$a$};
   \node at (4.9,1.25) {$-u$};
   \node at (4.1,1.25) {$u$};
\end{scope}

  \end{tikzpicture}
  \caption{A pictorial interpretation of the ``square root relation''
    as an intertwining relation, see \eqref{intertwK}. Fixing the
    outgoing index $a$ we get a
    relation in $End(V_0\otimes V_A)$, where $V_0$ and $V_A$ are the
    auxiliary spaces of the Lax operators, and the $\omega_j$,
    respectively. The intertwining relation holds for all $a=1,\dots,N$.}
  \label{fig:intertw1}
\end{figure}

We also remark that from an alternative point of view, the sq.r.r. can
be used to define integrable MPS:
\begin{thm}
  If there is a solution to \eqref{gyokBYB} for some $\omega$,
  such that $K(u)$ is invertible for almost all $u$, 
  then the MPS built from $\omega$ is integrable.
\end{thm}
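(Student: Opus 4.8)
\emph{Plan.} The statement is the converse direction of the construction in this section, so the plan is to run the chain \eqref{gyokBYB} $\Rightarrow$ \eqref{intertwK} $\Rightarrow$ \eqref{egyenloek2} $\Rightarrow$ \eqref{int2} $\Rightarrow$ \eqref{int1} backwards through the logic that produced it. First I would package the given two-site block into the matrix $K(u)=\sum_{a,b}E_{ab}\otimes\psi_{ab}(u)$ of \eqref{Kpsi}; by the index bookkeeping of Appendix \ref{app:basszus}, the square root relation \eqref{gyokBYB} is, component by component, exactly the intertwining relation \eqref{intertwK}, $A_jK(u)=K(u)B_j$ for $j=1,\dots,N$, with $A_j,B_j$ the dressed MPS matrices of \eqref{ABdef}. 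Thus the hypothesis hands us an intertwiner that is invertible for almost all $u$.

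Next I would fix $u$ in the dense set on which $K(u)$ is invertible, so that $A_j=K(u)B_jK(u)^{-1}$ for all $j$. Then for every word $j_1,\dots,j_L$ one has $A_{j_L}\cdots A_{j_1}=K(u)\,B_{j_L}\cdots B_{j_1}\,K(u)^{-1}$, and cyclicity of the trace over $V_0\otimes V_A$ gives $\mathrm{tr}[A_{j_L}\cdots A_{j_1}]=\mathrm{tr}[B_{j_L}\cdots B_{j_1}]$ for all $L$. In the notation of \eqref{psidef} this is precisely $\ket{\Psi_A(L)}=\ket{\Psi_B(L)}$ for every $L$, i.e. \eqref{egyenloek2}; note that here only the trivial direction of Theorem \ref{simmm} is used (similar matrices produce the same MPS), so no reducibility hypothesis is needed.

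The one step with actual content is then to identify these dressed MPS with the two sides of the integrability condition. Expanding $A_{j_L}\cdots A_{j_1}$ using \eqref{ABdef} and $R_{10}(u)=\sum_{ab}E_{ab}\otimes\mathcal{L}_{ab}(u)$, and tracing over $V_0$, reconstructs the transfer-matrix element $\langle j_L\dots j_1|t(u)|k_L\dots k_1\rangle=\mathrm{Tr}_0[\mathcal{L}_{j_Lk_L}(u)\cdots\mathcal{L}_{j_1k_1}(u)]$, so that $\ket{\Psi_A(L)}=t(u)\ket{\Psi_\omega}$; performing the same expansion for $B_j$ and using that a simple transpose on each $\mathcal{L}_{jk}$ reverses the order of the factors under $\mathrm{Tr}_0$, one gets $\mathrm{Tr}_0[\mathcal{L}^T_{j_Lk_L}\cdots\mathcal{L}^T_{j_1k_1}]=\mathrm{Tr}_0[\mathcal{L}_{j_1k_1}\cdots\mathcal{L}_{j_Lk_L}]$, which is exactly the matrix element of the reflected transfer matrix $\tilde t(u)=\mathrm{Tr}_0\,R_{01}(u)\cdots R_{0L}(u)$, hence $\ket{\Psi_B(L)}=\tilde t(u)\ket{\Psi_\omega}$. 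Combining with the previous paragraph, $t(u)\ket{\Psi_\omega}=\tilde t(u)\ket{\Psi_\omega}$ for all $u$ in a dense set; since the entries of $t(u)$ and $\tilde t(u)$ are rational in $u$, this extends to all $u$ by analytic continuation. That is \eqref{int2}, which is equivalent to the annihilation conditions \eqref{int1}, so the MPS built from $\omega$ is integrable.

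\emph{Main obstacle.} Nothing here is deep: the only place that requires care is the bookkeeping in $\ket{\Psi_B(L)}=\tilde t(u)\ket{\Psi_\omega}$ — tracking how the entrywise transpose in the definition of $B_j$ reverses the order of the Lax factors inside $\mathrm{Tr}_0$ and thereby produces $\tilde t(u)$ rather than $t(u)$ again — together with the (routine) remark that ``invertible for almost all $u$'' is far more than is needed to justify the analytic continuation of \eqref{int2} from a dense set to the whole complex plane. Everything else is definitional or a one-line trace manipulation.
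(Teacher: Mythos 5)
Your argument is correct and is essentially the paper's own proof, which is stated in one line: the $K$-matrix intertwines arbitrary products of the dressed matrices $A_j$ and $B_j$, so multiplying by $K^{-1}(u)$ and taking the trace yields \eqref{egyenloek2} and hence \eqref{int2}. Your additional bookkeeping — the explicit identification $\ket{\Psi_A(L)}=t(u)\ket{\Psi_\omega}$, $\ket{\Psi_B(L)}=\tilde t(u)\ket{\Psi_\omega}$, the remark that only the trivial direction of Theorem \ref{simmm} is used, and the analytic continuation from the dense set of invertibility — simply makes explicit what the paper leaves implicit.
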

\begin{proof}
This follows immediately from the construction: the $K$-matrix
intertwines an arbitrary product of the dressed matrices $A_j$ and
$B_j$, and after multiplying with $K^{-1}(u)$ and taking the trace this leads to the
integrability condition \eqref{egyenloek2}.
\end{proof}
This proof of the integrability of the states is the one-site counterpart of our
earlier proof presented in \cite{sajat-integrable-quenches}, which only
treated two-site invariant cases constructed directly from the
$K$-matrices. This new proof can be used also in spin chains with odd
lengths, where the earlier method was not applicable. The
integrability of one-site states at odd lengths was first observed in \cite{sajat-minden-overlaps}.

If the intertwiner \eqref{intertwK} is unique, then it is group invariant:
\begin{equation}
  \label{group3}
  (\Lambda_0(G)\otimes \Lambda_\omega(G)) K(u)  =
  K(u)  (\Lambda_0(G^*)\otimes \Lambda_\omega(G)).
\end{equation}
This is easily seen by
contradiction: Assuming a non-invariant $K$-matrix it can be seen that
after the group transformation it also solves the same relation, and
by unicity it has to be proportional to the original $K(u)$. The
proportionality factor has to be a one-dimensional representation of
the group $G'$, and in our cases all such representations are
trivial. 

For the two site block
$\psi(u)$ the group invariance property takes the form
 \begin{equation}
  \label{group2}
  \Lambda_\omega(G^{-1}) \psi_{ab}(u)\Lambda_\omega(G)=G_{ac}G_{bd}  \psi_{cd}(u).
\end{equation}

It follows from the Yang-Baxter relations that $\tilde\psi(u)=\check
R(2u)\psi(-u)$ is also a solution to \eqref{BYB}, and \eqref{Rinit}
implies that it satisfies the same
initial condition $\tilde\psi_{ab}(0)=\psi_{ab}(0)=\omega_{a}\omega_b$. If the
solution of the sq.r.r. is unique, then we obtain the condition
\begin{equation}
  \label{psiRpsi}
 \psi(u)=f(u)\check R(2u)\psi(-u)
\end{equation}
with some function $f(u)$ satisfying $f(u)f(-u)=1$.

Assuming uniqueness, we can always re-define the normalization of $\psi(u)$ so that it satisfies
\begin{equation}
  \label{symrel}
 \psi(u)=\check R(2u)\psi(-u).
\end{equation}
This will be called the ``symmetry relation''. Note that this still
leaves room for an arbitrary re-definition 
$\tilde\psi(u)=g(u)\psi(u)$ with any even function $g(u)$ satisfying
$g(0)=1$. 

This symmetry relation is closely analogous to the ``boundary cross-unitarity relation'' in
integrable QFT (compare \eqref{symrel} to eq. (3.35) in \cite{Ghoshal:1993tm}).
  We note that relation \eqref{symrel} fixes the first derivative $\psi'(0)$ as
  \begin{equation}
    \psi'(0)=\check{R}'(0)\psi(0)=\check{R}'(0) (\omega\cdot \omega).
  \end{equation}

 The $R$-matrices are polynomials of the rapidity parameter (apart from
the normalization factor), therefore it
is natural to suspect that the relevant finite dimensional solutions to \eqref{gyokBYB} will
be polynomials as well.  Then the sq.r.r. has an immediate consequence for the asymptotic
behaviour of $\psi(u)$.
\begin{thm}
  For a given solution let $\alpha$ denote the highest degree of $u$
  in $\psi(u)$, and let $\phi_{ab}$ be the coefficient of
  $u^\alpha$. If the MPS is irreducible then all $\phi_{ab}$ are scalars.
  \end{thm}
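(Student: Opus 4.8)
The plan is to extract the $u\to\infty$ asymptotics of the square root relation \eqref{gyokBYB00} and then apply Lemma \ref{irredU}. The first thing I would record is that, for both families \eqref{SUNR} and \eqref{SONR}, the matrix $\check R(u)=PR(u)$ has a finite limit at infinity equal to the permutation operator,
\begin{equation}
\lim_{u\to\infty}\check R(u)=P .
\end{equation}
For $SU(N)$ this is immediate from $\check R(u)=(\mathbb{1}+uP)/(1+u)$; for $SO(N)$ one checks that the numerator and denominator of $\check R(u)$ are polynomials of degree two in $u$, with leading coefficients $P$ and $1$, respectively. (Equivalently, one could clear the scalar denominator of $\check R(u)$ and treat \eqref{gyokBYB00} as a polynomial identity in $u$, the relevant fact being only that the top coefficient of the numerator of $\check R(u)$ is $P$.)

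Next I would write $\psi_{ab}(u)=\phi_{ab}\,u^{\alpha}+\mathcal{O}(u^{\alpha-1})$, so that the collection $\phi=\{\phi_{ab}\}$ is nonzero by the definition of $\alpha$. Dividing \eqref{gyokBYB00} by $u^{\alpha}$ and letting $u\to\infty$ — legitimate since the $\omega_j$ are $u$-independent and $\check R(u)$ has a finite limit — the relation collapses to
\begin{equation}
P^{de}_{ab}\,\omega_d\,\phi_{ec}=P^{de}_{bc}\,\phi_{ad}\,\omega_e .
\end{equation}
Performing the contractions with the permutation matrix this becomes
\begin{equation}
\omega_b\,\phi_{ac}=\phi_{ac}\,\omega_b ,\qquad a,b,c=1,\dots,N ,
\end{equation}
that is, every matrix $\phi_{ac}\in\text{End}(V_A)$ commutes with all of $\omega_1,\dots,\omega_N$. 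Since the MPS is irreducible, Lemma \ref{irredU} then forces each $\phi_{ac}$ to be proportional to the identity on $V_A$, which is exactly the claim.

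The only step that demands care is the index bookkeeping in going from the contracted relation to $\omega_b\phi_{ac}=\phi_{ac}\omega_b$: one has to keep track of which pair of physical spaces each $\check R$ acts on (they differ on the two sides of \eqref{gyokBYB00}) and contract the permutation indices accordingly. But the precise labelling that emerges is immaterial for the conclusion — whatever permutation of $a,b,c$ shows up, the content is still that every $\phi_{\bullet\bullet}$ commutes with every $\omega_\bullet$ — so nothing subtle survives beyond this routine manipulation, and Lemma \ref{irredU} finishes the argument.
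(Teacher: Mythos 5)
Your proof is correct and follows essentially the same route as the paper's: send $u\to\infty$ in the square root relation, use $\check R(u)\to P$ for both the $SU(N)$ and $SO(N)$ $R$-matrices to reduce the leading order to the commutativity condition $\omega_b\phi_{ac}=\phi_{ac}\omega_b$, and conclude via Lemma \ref{irredU}. The only difference is that you spell out the index contraction with the permutation operator explicitly, which the paper leaves implicit.
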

\begin{proof}
  Let us take the sq.r.r. and take the $u\to\infty$ limit. The asymptotic value of the $\check R$ matrices is
  the permutation operator for both the $SU(N)$ and the $SO(N)$ case,
  therefore we obtain the simple commutativity condition
  \begin{equation}
    \phi_{ab} \omega_{c}=  \omega_{c}      \phi_{ab}.
  \end{equation}
If there are no irreducible subspaces, then it follows from Lemma
\eqref{irredU} that every $\phi_{ab}$ is
  proportional to the identity matrix in $V_A$.
\end{proof}

It can be seen that the limiting values $\phi_{ab}$ are either symmetric or
  anti-symmetric in the indices $a,b$.
  This follows most easily from relation \eqref{symrel}. We get a
  symmetric (or anti-symmetric) $\phi_{ab}$ if its degree $\alpha$ in
  $\psi_{ab}(u)$ is even (or odd), respectively.
  
In the simple case of the XXZ spin chain it was already shown in
\cite{sajat-integrable-quenches,sajat-minden-overlaps} that all one-site product states
(corresponding to an MPS with bond dimension 1) are integrable, and
they can be obtained from the well known $K$-matrices that solve the
usual Boundary Yang-Baxter relation. In Appendix \ref{sec:XXZ} we show
that the $K$-matrices associated to the one-site states indeed satisfy
the sq.r.r..

\subsection{The Boundary Yang-Baxter relation}

The twisted Boundary Yang-Baxter relation is
\begin{equation}
  \label{twisted}
  K_2(v) R^{T}_{21}(u+v)K_1(u)R_{12}(v-u)=
  R_{21}(v-u)K_1(u) R^T_{12}(u+v)K_2(v).
\end{equation}
Eq. \eqref{twisted} is sometimes also called the BYB relation for
soliton non-preserving boundary conditions
\cite{Delius-non-preserving,doikou-non-preserving}.
The main difference as opposed to the ``untwisted'' BYB relation is the
appearance of the transposition operation. Typically the
twisted BYB relation is formulated with different conventions: it might include
additional shifts in the rapidities (which can be compensated by a
redefinition of $K(u)$), or a different transposition defined as
\begin{equation}
  \label{furat}
  A^t=CA^TC
\end{equation}
with $C\in \text{End}(V_j), C^2=1$ being some crossing matrix. The
action of $C$ can be compensated by a basis transformation, and this
is discussed in Section \ref{sec:twisted}.
In the present work we use \eqref{twisted} and \eqref{BYB} because
these forms are most convenient to treat the MPS.

In those cases when the $R$-matrix itself satisfies a crossing relation
of the form
   \begin{equation}
     R^{T_1}(u)=C_1R(-u-\sigma)C_1
  \end{equation}
with some crossing matrix $C$ and crossing parameter
$\sigma\in\complex$, the eq. \eqref{twisted} is equivalent to the
standard BYB relation. In our 
examples the $SO(N)$-symmetric $R$-matrix is crossing invariant with
$C=1$ (see \eqref{SONcrossing}), but
the $SU(N)$-invariant is not. 

Making use of the identification \eqref{Kpsi} we can write the twisted
BYB in the  form
\begin{equation}
  \label{BYB}
  \check R_{12}(v-u)\check R_{23}(u+v)  (\psi(v)\cdot \psi(u))=
   \check R_{34}(v-u)\check R_{23}(u+v)  (\psi(u)\cdot \psi(v)),
 \end{equation}
 which is satisfied by the two-site block $\psi(u)$. This is a relation in $V_4\otimes V_3\otimes V_2\otimes V_1\otimes
 End (V_A)$, and it is understood that the $\check R$ matrices
 act on the respective components in the tensor product. 
For a graphical interpretation of \eqref{BYB} see Fig. \ref{fig:bybmps}.

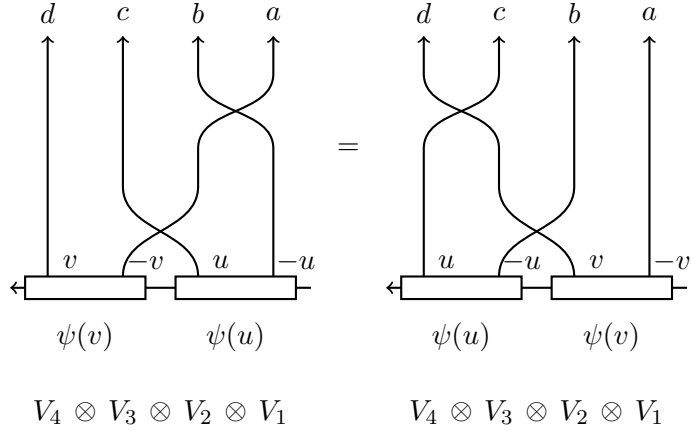
\begin{figure}
  \centering
  \begin{tikzpicture}

\draw [thick] (0.7,0.5) rectangle (2.3,0.8);
 \draw [thick] (2.7,0.5) rectangle (4.3,0.8);
 \draw [thick] (2.3,0.65) to (2.7,0.65);
 \draw [<-,thick] (0.5,0.65) to (0.7,0.65);
 \draw [thick] (4.3,0.65) to (4.5,0.65);
 
   [out=0,in=-90] (4,1);
   \draw [->,thick] (1,0.8) to (1,4);
   \draw [->,thick] (3,0.8) to [out=90,in=-90] (2,2) to (2,4);
   \draw [->,thick] (2,0.8) to [out=90,in=-90] (3,2) to (3,2.5) to
   [out=90,in=-90] (4,3.5) to (4,4);
   \draw [->,thick] (4,0.8) to (4,2.5) to [out=90,in=-90] (3,3.5) to (3,4);
   \node at (1.5,0) {$\psi(v)$};
     \node at (3.5,0) {$\psi(u)$};
   \node at (2.3,1) {$-v$};
   \node at (1.3,1) {$v$};
      \node at (3.3,1) {$u$};
      \node at (4.3,1) {$-u$};
      \node at (1,-1) {${V}_4$};
      \node at (2,-1) {${V}_3$};
      \node at (3,-1) {${V}_2$};
         \node at (4,-1) {${V}_1$};
         \node at (1.5,-1) {$\otimes$};
         \node at (2.5,-1) {$\otimes$};
         \node at (3.5,-1) {$\otimes$};
         \node at (1,4.3) {$d$};
         \node at (2,4.3) {$c$};
         \node at (3,4.3) {$b$};
         \node at (4,4.3) {$a$};

\node at (5,2.5) {$=$};
         
\draw [thick] (5.7,0.5) rectangle (7.3,0.8);
 \draw [thick] (7.7,0.5) rectangle (9.3,0.8);
 \draw [thick] (7.3,0.65) to (7.7,0.65);
 \draw [<-,thick] (5.5,0.65) to (5.7,0.65);
 \draw [thick] (9.3,0.65) to (9.5,0.65);
   \draw [->,thick] (9,0.8) to (9,4);
   \draw [->,thick] (8,0.8) to [out=90,in=-90] (7,2) to (7,2.5) to
   [out=90,in=-90] (6,3.5) to (6,4);
   \draw [->,thick] (7,0.8) to [out=90,in=-90] (8,2) to (8,4);
   \draw [->,thick] (6,0.8) to (6,2.5) to [out=90,in=-90] (7,3.5) to (7,4);
   \node at (6.5,0) {$\psi(u)$};
     \node at (8.5,0) {$\psi(v)$};
  \node at (7.3,1) {$-u$};
   \node at (6.3,1) {$u$};
      \node at (8.3,1) {$v$};
    \node at (9.3,1) {$-v$};

    \node at (6,-1) {${V}_4$};
      \node at (7,-1) {${V}_3$};
      \node at (8,-1) {${V}_2$};
         \node at (9,-1) {${V}_1$};
         \node at (6.5,-1) {$\otimes$};
         \node at (7.5,-1) {$\otimes$};
         \node at (8.5,-1) {$\otimes$};
             \node at (6,4.3) {$d$};
         \node at (7,4.3) {$c$};
         \node at (8,4.3) {$b$};
         \node at (9,4.3) {$a$};
  \end{tikzpicture}
  \caption{A pictorial representation for the BYB relation for the
    two-site MPS. $V_{1,2,3,4}$ denote the physical vector
    spaces, and $a,b,c,d$ are the physical indices. The matrices in
    the MPS act in the auxiliary space from the
    right to left. The local $\check R$ matrices acting at the crossings are defined such that
  their argument is always the rapidity coming from the right minus the
  rapidity coming from the left.}
  \label{fig:bybmps}
\end{figure}

The advantage of the representation \eqref{BYB} over \eqref{twisted} is that certain
 symmetry properties are more easily identified. In particular,  \eqref{BYB}
 involves the same $R$-matrix. This homogeneity in the exchange
 relation is the reason why it
 is always the twisted BYB which is relevant for integrable states of
 homogeneous spin chains.

In our earlier works
\cite{sajat-integrable-quenches,sajat-su3-1,sajat-su3-2} the BYB
relations were used as a starting point to define integrable initial
states. Here we take a different approach, and show that the BYB
relation actually follows from the integrability condition under
certain conditions. We intend to show that the $K$-matrix, which is
obtained as the intertwiner from \eqref{intertwK}-\eqref{Kpsi} indeed satisfies
the twisted BYB.
 
First of all we note that for the special point $u=0$ the
twisted BYB in the form \eqref{BYB} is equivalent to a double
application of the sq.r.r.. This follows from the initial condition
\eqref{init}. However, there is a more direct connection 
valid for arbitrary $u,v$.

Let us consider a double dressing of the MPS $\ket{\Psi_\omega}$ with
two transfer matrices. As it was already argued in
our previous work \cite{sajat-integrable-quenches} it follows from the integrability condition that
\begin{equation}
  t(u) t(v)\ket{\Psi_\omega}=\tilde t(u) \tilde t(v)\ket{\Psi_\omega}.
\end{equation}
Let us denote the auxiliary product space of the vectors above as $V_{a_1}\otimes V_{a_2}\otimes V_A$,
where $V_{a_1}$ and $V_{a_2}$ are the auxiliary spaces for the
transfer matrices $t(u)$ and $t(v)$, respectively.
We construct the corresponding two sets of dressed matrices as
\begin{equation}
   \begin{split}
    A_j&=\sum_{k,l} \mathcal{L}_{jk}(u)\otimes  \mathcal{L}_{kl}(v)
    \otimes \omega_l \\
    B_j&=\sum_{k,l} \mathcal{L}^T_{jk}(u)\otimes \mathcal{L}^T_{kl}(v)
    \otimes \omega_l. \\
  \end{split}
  \label{double}
\end{equation}

Our goal is to construct intertwiners for these
doubly dressed MPS:
\begin{equation}
  \label{doubleinter}
  M(u,v)A_j=B_j M(u,v).
\end{equation}
  We consider the following two candidates for the intertwiner
  $M(u,v)$:
  \begin{equation}
    \begin{split}
      M_1(u,v)&=   K_2(v) R^{T}_{21}(u+v)K_1(u)R_{12}(v-u)\\
      M_2(u,v)&=  R_{21}(v-u)K_1(u) R^T_{12}(u+v)K_2(v).
    \end{split}
  \end{equation}
Here it is understood that $K_{1,2}(u)$ act on $V_{a_{1,2}}$, respectively. For a
graphical interpretation of these two intertwiners see
Fig. \ref{fig:BYBintertw}.

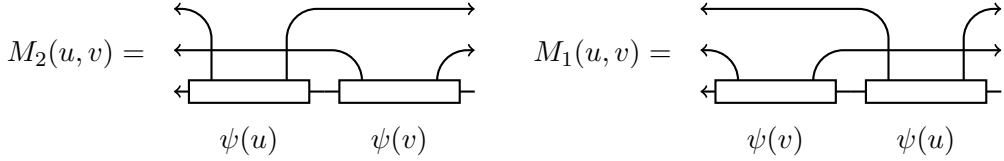
\begin{figure}
  \centering
  \begin{tikzpicture}
    \begin{scope}[xshift=-7cm]
      \node at (2.2,1.5) {$M_2(u,v)=$};
      \pic at (3,0) {psiny};
      \node at (6.5,0.35) {$\psi(v)$};
          \node at (4.5,0.35) {$\psi(u)$};
      \pic at (5,0) {psi};
        \draw [->,thick] (4,1.15) to (4,1.7) to [out=90,in=0]  (3.6,2.1) to (3.5,2.1);
        \draw [->,thick] (5,1.15) to (5,1.7) to [out=90,in=180] (5.4,2.1) to (7.5,2.1);
 \draw [->,thick] (6,1.15) to [out=90,in=0]    (5.6,1.55) to (3.5,1.55);
 \draw [->,thick] (7,1.15)  to [out=90,in=180] (7.4,1.55) to (7.5,1.55);
\end{scope}

   \node at (2.2,1.5) {$M_1(u,v)=$};
    \pic at (3,0) {psiny};
      \node at (4.5,0.35) {$\psi(v)$};
          \node at (6.5,0.35) {$\psi(u)$};
      \pic at (5,0) {psi};
        \draw [->,thick] (4,1.15) to [out=90,in=0]  (3.6,1.55) to (3.5,1.55);
        \draw [->,thick] (5,1.15) to [out=90,in=180] (5.4,1.55) to (7.5,1.55);
 \draw [->,thick] (6,1.15) to (6,1.7) to [out=90,in=0]    (5.6,2.1) to (3.5,2.1);
 \draw [->,thick] (7,1.15) to (7,1.7) to [out=90,in=180] (7.4,2.1) to
 (7.5,2.1);
  \end{tikzpicture}
  \caption{The two intertwiners which are later identified as the two
    sides of the twisted BYB. Note that the order of $\psi(u)$ and
    $\psi(v)$ is reversed, but the upper (lower) lines are always
    associated to the rapidities $u$ and $v$, respectively.}
  \label{fig:BYBintertw}
\end{figure}

 Multiple use of the Yang-Baxter relation and the simple intertwiner
 relation \eqref{intertwK} shows that both $M_1(u,v)$ and $M_2(u,v)$
 satisfy \eqref{doubleinter}. The idea is that $K_1(u)$ always
 intertwines the dressing with $\mathcal{L}(u)$, and $K_2(v)$
 intertwines $\mathcal{L}(v)$, and the order of the exchange with
 $K_1(u)$ and $K_2(v)$ is just the opposite for $M_1(u,v)$ and
 $M_2(u,v)$. We also use that in the intermediate steps the 
 order of the dressings of the MPS can be exchanged as well, using the
 standard RTT relation. For a graphical interpretation of the steps of
 the proof see \ref{fig:doubleintertw}.
 
\begin{figure}
  \centering
  \begin{tikzpicture}

    \def\ymax{2.7}

    \foreach \x [count=\n]in {1,2,3}{
      \pic at (\x,0) {omega};
    }
      \pic at (0,0) {omegany};
      \pic at (3,0) {psi};
      \node at (4.5,0.35) {$\psi(v)$};
          \node at (6.5,0.35) {$\psi(u)$};
      \pic at (5,0) {psi};
        \draw [->,thick] (4,1.15) to [out=90,in=0]  (3.6,1.55) to (-0.5,1.55);
        \draw [->,thick] (5,1.15) to [out=90,in=180] (5.4,1.55) to (7.5,1.55);
 \draw [->,thick] (6,1.15) to (6,1.7) to [out=90,in=0]    (5.6,2.1) to (-0.5,2.1);
 \draw [->,thick] (7,1.15) to (7,1.7) to [out=90,in=180] (7.4,2.1) to (7.5,2.1);

 \begin{scope}[yshift=-4cm]
   \foreach \x [count=\n]in {2,3,4,5}{
      \pic at (\x,0) {omega};
}
      \pic at (-1,0) {psiny};
      \node at (0.5,0.35) {$\psi(v)$};
          \node at (6.5,0.35) {$\psi(u)$};
          \pic at (5,0) {psi};
       \draw [->,thick] (0,1.15) to [out=90,in=0]  (-0.4,1.55) to (-0.5,1.55);
        \draw [->,thick] (1,1.15) to [out=90,in=180] (1.4,1.55) to (7.5,1.55);
 \draw [->,thick] (6,1.15) to (6,1.7) to [out=90,in=0]    (5.6,2.1) to (-0.5,2.1);
 \draw [->,thick] (7,1.15) to (7,1.7) to [out=90,in=180] (7.4,2.1) to (7.5,2.1);
 \end{scope}

 \begin{scope}[yshift=-8cm]
   \foreach \x [count=\n]in {2,3,4,5}{
      \pic at (\x,0) {omega};
}
      \pic at (-1,0) {psiny};
      \node at (0.5,0.35) {$\psi(v)$};
          \node at (6.5,0.35) {$\psi(u)$};
          \pic at (5,0) {psi};
       \draw [->,thick] (0,1.15) to [out=90,in=0]  (-0.4,1.55) to (-0.5,1.55);
        \draw [->,thick] (1,1.15) to (1,1.7) to [out=90,in=180]
        (1.4,2.1) to (6.5,2.1) to [out=0,in=180] (7.5,1.55);
 \draw [->,thick] (6,1.15) to [out=90,in=0]    (5.6,1.55) to
 (0.5,1.55) to [out=180,in=0] (-0.5,2.1);
 \draw [->,thick] (7,1.15) to (7,1.7) to [out=90,in=180] (7.4,2.1) to (7.5,2.1);
 \end{scope}

 \begin{scope}[yshift=-12cm]
   \foreach \x [count=\n]in {4,5,6,7}{
      \pic at (\x,0) {omega};
}
      \pic at (-1,0) {psiny};
      \node at (0.5,0.35) {$\psi(v)$};
          \node at (2.5,0.35) {$\psi(u)$};
          \pic at (1,0) {psi};
       \draw [->,thick] (0,1.15) to [out=90,in=0]  (-0.4,1.55) to (-0.5,1.55);
        \draw [->,thick] (1,1.15) to (1,1.7) to [out=90,in=180]
        (1.4,2.1) to (7.5,2.1) to [out=0,in=180] (8.5,1.55);
 \draw [->,thick] (2,1.15) to [out=90,in=0]    (1.6,1.55) to
 (0.5,1.55) to [out=180,in=0] (-0.5,2.1);
 \draw [->,thick] (3,1.15) to [out=90,in=180] (3.4,1.55) to (7.5,1.55)
 to [out=0,in=180] (8.5,2.1);
\end{scope}

 \begin{scope}[yshift=-16cm]
   \foreach \x [count=\n]in {4,5,6,7}{
      \pic at (\x,0) {omega};
}
      \pic at (-1,0) {psiny};
      \node at (0.5,0.35) {$\psi(v)$};
          \node at (2.5,0.35) {$\psi(u)$};
          \pic at (1,0) {psi};
       \draw [->,thick] (0,1.15) to [out=90,in=0]  (-0.4,1.55) to (-0.5,1.55);
        \draw [->,thick] (1,1.15) to [out=90,in=180]  (1.4,1.55) to (7.5,1.55);
 \draw [->,thick] (2,1.15) to (2,1.7) to [out=90,in=0]    (1.6,2.1) to
 (-0.5,2.1);
 \draw [->,thick] (3,1.15) to (3,1.7) to [out=90,in=180] (3.4,2.1) to (7.5,2.1);
 \end{scope}

   \end{tikzpicture}
   \caption{A graphical demonstration for the double intertwining
     using $M_1(u,v)$, which is the object shown on the r.h.s. of
     Fig. \ref{fig:BYBintertw}. Here we intertwine multiple products of the
     doubly dressed MPS, by multiple use of the simple intertwining
     relation \eqref{intertwK} (see Fig. \ref{fig:intertw1}) and the
     Yang-Baxter relations \eqref{YB0}.
     Essentially the same steps (although in
    a different order) can be repeated
     also with $M_2(u,v)$, which is shown on the l.h.s. of Fig.
     \ref{fig:BYBintertw}. If the intertwiner is unique
     then $M_1$ and $M_2$ have to be proportional to each other. This
     intertwining relation thus establishes a connection between the
     twisted BYB and the integrability condition.
}
  \label{fig:doubleintertw}
\end{figure}
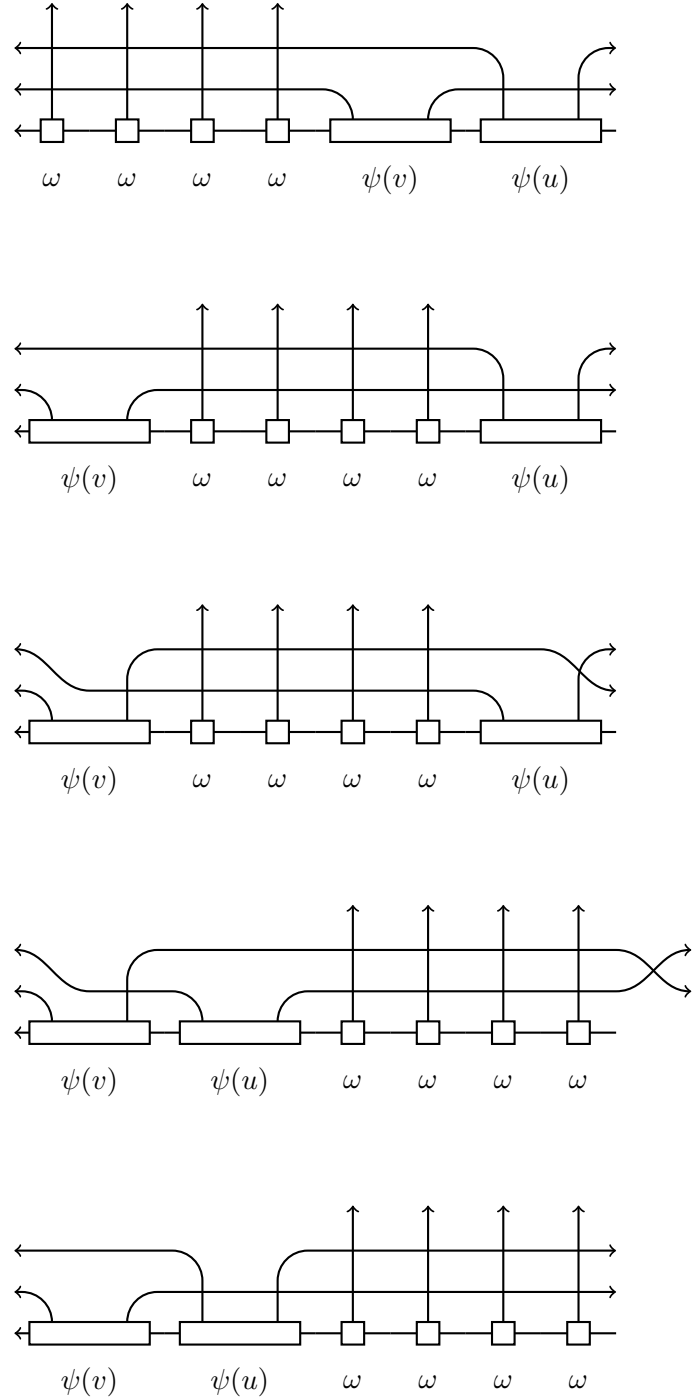

If the doubly dressed MPS are irreducible then
Theorem
\ref{simmm} states that the intertwiners are unique up to
normalization, which means that
\begin{equation}
  M_1(u,v)=g(u,v)M_2(u,v)
\end{equation}
with some function $g(u,v)$.
Investigating specific components of the relation
\eqref{BYB} it can be seen that $g(u,v)\equiv 1$. Thus, given the
irreducibility property we have established that the solution of the sq.r.r. also
solves the twisted BYB relation \eqref{twisted}. 

Unfortunately we have not yet managed to prove the twisted BYB
independently from the irreducibility property.
In our concrete examples we have checked numerically that the doubly
dressed MPS are indeed irreducible for generic $u,v$ (see Appendix \ref{sec:C1}), but
it would be desirable to establish this by purely analytic means.

Of course it can be checked by direct substitution whether a specific
solution to \eqref{gyokBYB} also solves \eqref{BYB}. We performed this
in our concrete examples (presented later in Sec. \ref{sec:su} and
\ref{sec:so}) and found agreement. 

\subsection{Implications of integrability: the Quantum Transfer Matrix}

Here we show that the BYB relation \eqref{BYB} allows for the
construction of commuting sets of double row transfer matrices.
The construction is
essentially the same as in the papers dealing with soliton
non-preserving boundary conditions \cite{doikou-non-preserving,ragoucy-doikou-non-preserving}. 
In the present context these double row objects 
will be called Quantum Transfer Matrices (QTM's), in analogy with the
thermal case \cite{kluemper-review}.

Instead of the homogeneous chain it is useful to introduce an alternating sequence of
inhomogeneities $(-u_1,u_1,-u_2,u_2,\dots,-u_{L/2},u_{L/2})$. The
parameters $u_j$ will play the role of spectral parameters for
``double-row transfer matrices'' to be constructed.

We define two inhomogeneous transfer matrices as
\begin{equation}
  \label{inhomT}
  \begin{split}
    t(v|u_1,\dots,u_{L/2})&=\text{Tr}\ T(v|u_1,\dots,u_{L/2}), \\
    T(v|u_1,\dots,u_{L/2})&=R_{0L}(v-u_{L/2})R_{0,L-1}(v+u_{L/2})\dots R_{02}(v-u_1)R_{01}(v+u_1)\\
    \tilde t(v|u_1,\dots,u_{L/2})&=\text{Tr}\  \tilde T(v|u_1,\dots,u_{L/2}),\\
  \tilde T(v|u_1,\dots,u_{L/2})&= R_{01}(v-u_1) R_{02}(v+u_1) \dots R_{0,L-1}(v-u_{L/2})R_{0L}(v+u_{L/2}).\\
  \end{split}
\end{equation}
Here the $L$ inhomogeneities are of alternating signs, and in the notation
we write them as $(u_1,\dots,u_{L/2})$.

We also define inhomogeneous initial states as
\begin{equation}
  \label{inhomMPS}
    |\Psi(u_1,u_2,\dots,u_{L/2})\rangle=\sum_{i_1,\dots,i_L=1}^N{\rm tr}_{0}
    \left[
\psi_{i_{L},i_{L-1}}(u_{L/2})\dots \psi_{i_2,i_1}(u_1)
\right]
|i_L,\dots,i_1\rangle.
\end{equation}
The physical states \eqref{psidef} are reproduced in the homogeneous
limit $u_j\to 0$ due to the initial condition \eqref{init}.

Using the unitarity condition we can write \eqref{BYB} in the form
\begin{equation}
  \label{BYB2}
  \psi(v)\cdot\psi(u)=
 \check R_{23}(-u-v)  \check R_{12}(u-v)  \check R_{34}(v-u)\check
 R_{23}(u+v)  (\psi(u)\cdot \psi(v)).
\end{equation}
This exchange relation can be extended to the full MPS,
for example
\begin{equation}
  \begin{split}
&  |\Psi(u_1,u_2,\dots,u_{L/2})\rangle=\\
 &  \check R_{23}(-u_1-u_2)  \check R_{12}(u_1-u_2)  \check R_{34}(u_2-u_1)\check R_{23}(u_1+u_2)
    |\Psi(u_2,u_1,\dots,u_{L/2})\rangle,
\end{split}
  \end{equation}
and similarly for other exchanges of the inhomogeneity
parameters. It follows from the Yang-Baxter equations that this exchange relation is also compatible with the
inhomogeneous monodromy matrices, and we have for example
\begin{equation}
  \begin{split}
&t(v|u_1,u_2,\dots,u_{L/2})  |\Psi(u_1,u_2,\dots,u_{L/2})\rangle=
\check R_{23}(-u_1-u_2)  \check R_{12}(u_1-u_2) \times \\
&\hspace{0.5cm}\times \check R_{34}(u_2-u_1)\check R_{23}(u_1+u_2)
   \Big[ t(v|u_2,u_1,\dots,u_{L/2})   |\Psi(u_2,u_1,\dots,u_{L/2})\rangle\Big].
   \end{split}
\end{equation}
Similar relations hold for arbitrary products of the transfer
matrices\footnote{Such exchange relations hold for any transfer 
  matrix which is built from Lax operators satisfying the fundamental exchange
relation dictated by the $R$-matrix. They are the so-called ``fused
transfer matrices'', and they also include the space reflected
fundamental transfer matrix $\tilde t(v|u_1,u_2,\dots)$, see for
example \cite{sajat-su3-1,sajat-su3-2}. In this work
we don't discuss the fusion hierarchy in detail, therefore the main discussion
is limited to the products of the fundamental transfer matrix.}.

Let us consider partition functions involving two
different MPS, that serve as initial and final states. They will be denoted as $\ket{\Psi_A}$ and
$\bra{\Psi_B}$, and the different subscripts indicate that they are
not necessarily adjoints of each other. It is important that both
two-site blocks $\psi_{A}(u)$ and $\psi_B(u)$ satisfy the same twisted
reflection relation \eqref{BYB}.

We define the inhomogeneous dual MPS vectors as
\begin{equation}
  \label{inhomMPSb}
    \bra{\Psi_B(u_1,u_2,\dots,u_{L/2})}=\sum_{i_1,\dots,i_L=1}^N{\rm tr}_{0}
    \left[
\psi_{B,i_{L},i_{L-1}}(-u_{L/2})\dots \psi_{B,i_2,i_1}(-u_1)
\right]
\bra{i_L,\dots,i_1}.
\end{equation}
It is important that the rapidity parameters are taken with a sign difference. As an effect these states
satisfy the dual exchange relation
\begin{equation}
  \begin{split}
&\bra{\Psi_B(u_1,u_2,u_3,\dots,u_{L/2})}
 \check R_{23}(-u_1-u_2)  \check R_{12}(u_1-u_2)  \check R_{34}(u_2-u_1)\check R_{23}(u_1+u_2)
  =\\
 & 
 \bra{\Psi_B(u_2,u_1,u_3,\dots,u_{L/2})},
 \end{split}
\end{equation}
and similarly for exchanges of other rapidity pairs.

Let us consider the partition functions 
\begin{equation}
  \label{ZAB1}
  \begin{split}
&  Z_{AB}(v_1,\dots,v_m|u_1,\dots,u_{L/2})=\\
& \hspace{2cm} \bra{\Psi_B(u_1,\dots,u_{L/2})}\prod_{j=1}^m
  t(v_j|u_1,\dots,u_{L/2})   \ket{\Psi_A(u_1,\dots,u_{L/2})}.
  \end{split}
\end{equation}
The $Z_{AB}$ are completely symmetric in both the $u$-
and the $v$-parameters. Symmetry with respect to $v_j$, $j=1\dots m$
follows from the commutativity of the transfer matrices, whereas
symmetry with respect to $u_j$, $j=1\dots L/2$ follows from the
above exchange relations involving the initial and final states. For a
pictorial interpretation of the partition functions see Fig. \ref{fig:ZAB}.

In the physical applications it is usually required that the final
state is the dual vector to the initial state, which results in the
requirement
\begin{equation}
  \psi_{B,j,k}(0)=(\psi_{A,j,k}(0))^*,\qquad j,k=1\dots N.
\end{equation}
Generally this means that $\psi_A(u)$ and $\psi_B(u)$ are two distinct
solutions to \eqref{BYB}, unless all the matrices can be chosen as
completely real.

The $Z_{AB}$ can be interpreted as the Loschmidt amplitude for some discrete
time evolution: in the homogeneous limit $u_j\to 0$ they can
be used to approximate the physical Loschmidt amplitude
\begin{equation}
  Z_{AB}\approx \bra{\Psi_0}e^{-iHt}\ket{\Psi_0}.
\end{equation}
For the details of this procedure we refer the reader to
\cite{sajat-integrable-quenches}. 

The partition functions \eqref{ZAB1} allow for an alternative evaluation, which leads
to the introduction of the double row Quantum Transfer Matrices, which
act in the horizontal direction in Fig. \ref{fig:ZAB}.
It can be read off
Fig. \ref{fig:ZAB} (or it can be established by purely algebraic
means) that their explicit form is
\begin{equation}
  \label{TABdef}
  \begin{split}
 &   \mathcal{T}_{AB}(u|v_1,\dots,v_m)=\\
   & \sum_{a_1,a_2,b_1,b_2=1}^N
  \psi_{A,b_2,b_1}(u)\otimes \Big[
  T_{a_2b_2}(-u|-v_1,\dots,-v_m)T_{a_1b_1}(u|-v_1,\dots,-v_m)\Big]
  \otimes  \psi_{B,a_2,a_1}(-u).
  \end{split}
\end{equation}
Alternatively this can be computed as
\begin{equation}
  \mathcal{T}_{AB}(u)=\text{Tr} \left(M_{A}(u)K^T_B(-u)\right),
\end{equation}
where
\begin{equation}
  M_A(u)=T(-u)K_A(u)T^T(u)
\end{equation}
is the ``quantum monodromy matrix'',
and the products and traces above
are to be understood in the indices $a_{1,2}$, $b_{1,2}$ which
originally label
the states of the physical spin chain. 

The partition function is then evaluated as
\begin{equation}
    Z_{AB}(v_1,\dots,v_m|u_1,\dots,u_{L/2})=\text{Tr}\left[
\prod_{j=1}^{L/2}  \mathcal{T}_{AB}(u_j|v_1,\dots,v_m)\right].
  \end{equation}
The symmetry properties of $Z_{AB}$ are equivalent to the
commutativity condition
\begin{equation}
  [  \mathcal{T}_{AB}(u_1),  \mathcal{T}_{AB}(u_2)]=0,
\end{equation}
which can be proven directly using \eqref{BYB} and the Yang-Baxter relations.

We note that depending on the circumstances the double row transfer
matrices \eqref{TABdef} can be used to define integrable boundary
models with local Hamiltonians and additional degrees of freedom at
the two boundaries. For an example of such a model see
\cite{dewolfe-mann}. In Sections \ref{sec:su} and \ref{sec:so} we 
present explicit examples for the integrable two site block
$\psi_{ab}(u)$, which are new in some cases and thus would lead to new
integrable boundary conditions. However, it is not the purpose of the
present paper to investigate the Hamiltonians and/or the spectra of
these models, and this is left to further work.

\begin{figure}
  \centering
  \begin{tikzpicture}[scale=1]
\node [white] at (0,7) {.};
    \foreach \y [count=\n]in {0,5}{
        \begin{scope}[yshift = \y cm]
\draw [thick] (0.7,0.5) rectangle (2.3,0.8);
\draw [thick] (2.7,0.5) rectangle (4.3,0.8);
 \draw [thick] (4.7,0.5) rectangle (6.3,0.8);
 \draw [thick] (2.3,0.65) to (2.7,0.65);
 \draw [<-,thick] (0.5,0.65) to (0.7,0.65);
 \draw [thick] (4.3,0.65) to (4.7,0.65);
\draw [thick] (6.3,0.65) to (6.5,0.65);
      \end{scope}
    }

       \foreach \y [count=\n]in {1,2,3,4}{
         \begin{scope}[yshift = \y cm]
           \draw [<-,thick] (0.5,0.65) to (6.5,0.65);
   \end{scope}
    }
          
\foreach \x [count=\n]in {0,2,4}{ 
   \begin{scope}[xshift = \x cm]
 \draw [->,thick] (1,0.8) to (1,5.5);
 \draw [->,thick] (2,0.8) to (2,5.5);
          \end{scope}
    }

    \node at (7,1.65) {$v_1$};
    \node at (7,2.65) {$v_2$};
    \node at (7,3.65) {$v_3$};
       \node at (7,4.65) {$v_4$};
   \node at (1.5,0) {$\psi_A(u_1)$};
   \node at (3.5,0) {$\psi_A(u_2)$};
   \node at (5.5,0) {$\psi_A(u_3)$};
      \node at (1.5,6.3) {$\psi_B(-u_1)$};
   \node at (3.5,6.3) {$\psi_B(-u_2)$};
      \node at (5.5,6.3) {$\psi_B(-u_3)$};
   \node at (1.3,1) {$u_1$};
   \node at (2.3,1) {$-u_1$};
      \node at (3.3,1) {$u_2$};
      \node at (4.3,1) {$-u_2$};
         \node at (5.3,1) {$u_3$};
      \node at (6.3,1) {$-u_3$};

    \end{tikzpicture}
    \caption{An example for a partition function with integrable
      boundaries. In the original physical picture the bottom and top
      rows are interpreted as the MPS which serve as the initial and final states for some
      discrete time evolution. Alternatively, the partition function
      can be evaluated by the Quantum Transfer Matrix, which acts in
      the horizontal direction: in this picture the two-site blocks
      play the role of integrable boundaries, with an additional
      degree of freedom at the boundary.}
  \label{fig:ZAB}
\end{figure}
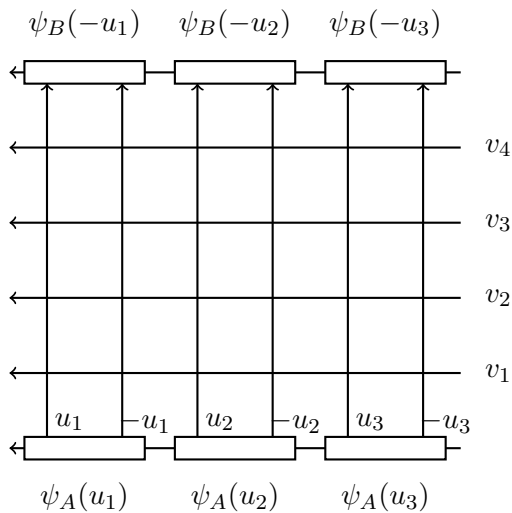

\section{Relation with the twisted Yangian}

\label{sec:twisted}

In the following we describe the twisted Yangians that are relevant to
the integrable MPS. Our goal is to establish a direct relation to
the abstract algebraic setting; in this Section we limit ourselves to the case of the
symmetric pair $(SU(N),SO(N))$.  

The twisted Yangian $Y^+(N)$ corresponding
to the orthogonal Lie algebra $\mathfrak{o}(N)$ is an abstract algebra
with generators $s^{(j)}_{kl}$, $j=1,2,\dots,\infty$, $k,l=1,\dots,N$ given by the
following exchange relations \cite{olshanskii2,molev-yangians-review}. Let us define the matrix
$S(u)$ as a formal series in $u^{-1}$ as
\begin{equation}
  \label{Sdef}
  S(u)=\sum_{k,l=1}^N s_{kl}(u)\otimes E_{kl},\qquad
  s_{kl}(u)=\delta_{kl}+\sum_j s^{(j)}_{kl}u^{-j}.
\end{equation}
Then the exchange relations are given by the formal equations
\begin{equation}
  \begin{split}
  \label{twistedY}
  R(v-u)  S_2(v) R^{t}(-u-v)S_1(u)&=
S_1(u) R^t(-u-v)S_2(v)R(v-u)\\
    S^t(-u)&=S(u)+\frac{S(u)-S(-u)}{2u}.
  \end{split}
\end{equation}
Here the transposition is defined as in \eqref{furat} with a charge
conjugation matrix given by $C_{j,k}=\delta_{j,N-k+1}$. The second
relation is equivalent to our symmetry relation \eqref{symrel} 
specified to the $SU(N)$-symmetric $R$-matrix.

The conventions of the relations \eqref{twistedY} are different from the
ones used in our 
\eqref{twisted}.
The first difference is simply a sign convention for the rapidity
parameter, which originates from the convention $R(u)=1-u^{-1}P$ used in
\cite{molev-representations}. This can be compensated immediately.
The second difference is that our equations involve only
standard transpositions.  Nevertheless the two sets of equations can be transformed into each other
by a basis transformation: let $Z\in SU(N)$ be a matrix
satisfying
\begin{equation}
  ZZ^T=C.
\end{equation}
Such a matrix is easily found: in the
case of $N=2$ we can choose
\begin{equation}
  Z=  \frac{1}{\sqrt{2}}
  \begin{pmatrix}
1  & i  \\
  1   & -i   \\
  \end{pmatrix},
\end{equation}
and for higher $N$ we can build $Z$ analogously using the block structure
of $C$. 
If $S(u)$ satisfies the defining relations \eqref{twistedY} then
the transformed matrix defined as
\begin{equation}
  \label{KS}
  K(u)=Z^{-1}S(-u)Z
\end{equation}
satisfies our \eqref{twisted}.
This can be checked by direct computation, using the group invariance
properties \eqref{Rgroup}-\eqref{Rgroup2} and also $Z^{-1}=Z^TC$ which
follows from $C^2=1$.

The definition
\eqref{Sdef} itself can be interpreted as an asymptotic condition: it
is equivalent to the requirement that in $S_{ab}(u)$ (or $\psi_{ab}(u)$) the leading term in
$u$ should be $\delta_{ab}$. This requirement is invariant with respect to the
basis transformation \eqref{KS}.

The finite dimensional irreducible representations of $Y^+(N)$ have been characterized in
\cite{molev-representations}. Any such representation gives a concrete
solution $\psi_{ab}(u)$ to \eqref{BYB}, which will result in an
integrable two-site invariant MPS. We will be looking for the subset
of solutions which also satisfy the factorizability condition
\eqref{init}. One possible strategy could be to
survey all possible irreps and to investigate the possibility of a
factorization separately. Instead, we use a different method: we construct concrete solutions to a
given $\omega$  by solving the sq.r.r., this is presented in the next
two Sections. Nevertheless
here we review the main statements of \cite{molev-representations},
translated into our framework through relations \eqref{KS} and \eqref{Kpsi}.

First we define Lax-operators that generate the finite dimensional
representations of the Yangian. Let $\Lambda$ be a finite dimensional
irreducible representation of $\mathfrak{sl}(N)$ acting on the vector
space $V_\Lambda$.
For the representation of the $E_{ab}$ standard basis matrices we will
use the notation $E_{ab}^\Lambda$.
We define the Lax operator $ \mathcal{L}^{(1,\Lambda)}(u)$ acting
on the product of a physical and auxiliary space $V_1\otimes V_\Lambda$ as
\begin{equation}
  \mathcal{L}^{(1,\Lambda)}_{ab}(u)=u\delta_{ab}+E^\Lambda_{ba},
\end{equation}
where $a,b=1\dots N$ are interpreted as the physical indices, and the
$E^\Lambda_{ab}$ act on the auxiliary space.

From any such Lax operator we can build transfer matrices in the usual
way. In the inhomogeneous case discussed above
\begin{equation}
  t^{(\Lambda)}(v)=\text{Tr}_\Lambda\Big[
   \mathcal{L}^{(L,\Lambda)}(v-u_{L/2})
   \mathcal{L}^{(L-1,\Lambda)}(v+u_{L/2})\dots
   \mathcal{L}^{(2,\Lambda)}(v-u_1)
   \mathcal{L}^{(1,\Lambda)}(v+u_1)\Big]. \\
\end{equation}
These transfer matrices commute even for different representations:
\begin{equation}
  [t^{(\Lambda)}(v),t^{(\Lambda')}(v')]=0.
\end{equation}

Let us denote by $F_{ab}=E_{ab}-E_{ab}^t$ the generators of
$\mathfrak{o}(N)$ in the defining representation. Let us take an irreducible representation $\Omega$ of 
$\mathfrak{o}(N)$ with the matrices $F_{ab}^\Omega$ acting on the
vector space
$V_\Omega$.
Then we can construct the irreducible solution
$\psi^\Omega(u)$ 
to the BYB with auxiliary space $V_\Omega$ as \cite{molev-representations}
\begin{equation}
  \label{rootsolutions}
  \psi^\Omega_{ab}(u)=\delta_{ab}+F^\Omega_{ab}(u+1/2)^{-1}.
\end{equation}
Such solutions can be called ``root solutions'' as they serve as
starting points to construct all finite dimensional representations of
the twisted Yangian. They obviously satisfy the group symmetry
requirement \eqref{group2} with $\mathcal{G}'=SO(N)$. The MPS
corresponding to a root solution $\psi^\Omega(u)$ will be denoted by
$\ket{\Psi^\Omega}$.

Further non-trivial solutions can be obtained by ``dressing'' these
root solutions with the Lax-operators above. Taking any solution
$\psi(u)$ a new solution is constructed as
\begin{equation}
  \label{dressedpsi}
\tilde \psi_{ab}(u)= \mathcal{L}_{ac}^{(\Lambda)}(v-u) \mathcal{L}_{bd}^{(\Lambda)}(v+u) \psi_{cd}(u).
\end{equation}
In terms of the $K$-matrices this dressing is given by
\begin{equation}
  \label{dressedK}
  \tilde K(u)= \mathcal{L}^{(0,\Lambda)}(v-u)   K(u) (\mathcal{L}^{(0,\Lambda)}(v+u))^t.
\end{equation}
where $0$ denotes the physical space for the action of the
$K$-matrix.  In terms of the algebraic setting the dressing \eqref{dressedK}
describes the co-ideal property of the twisted Yangian within the
original Yangian with the usual co-product. For a graphical
interpretation of this dressing see the first figure of \ref{fig:dressing}.

\begin{figure}
  \centering
  \begin{tikzpicture}
    \draw [<-,thick] (0.5,0.65) to (0.7,0.65);
    \draw [thick] (2.3,0.65) to (2.5,0.65);
    \draw [->,thick] (1,0.8) to (1,1.7);
    \draw [->,thick] (2,0.8) to (2,1.7);
    \draw [<-,thick,dashed] (0.5,1.25) to (2.5,1.25);
       \draw [thick] (0.7,0.5) rectangle (2.3,0.8);
\node at (1.5,0.1) {$\psi(u)$};
 \node at (1.3,1.8) {$u$};
   \node at (2.3,1.8) {$-u$};       
   \node at (2.7,1.25) {$v$};

   \begin{scope}[xshift=6cm]
         \draw [thick] (0.5,0.65) to (0.7,0.65);
    \draw [thick] (2.3,0.65) to (2.5,0.65);
    \draw [->,thick] (1,0.8) to (1,2.4);
    \draw [->,thick] (2,0.8) to (2,2.4);
    \draw [<-,thick,dashed] (0.5,1.25) to (2.5,1.25);
       \draw [<-,thick,dashed] (0.5,1.75) to (2.5,1.75);
       \draw [thick] (0.7,0.5) rectangle (2.3,0.8);
\node at (1.5,0.1) {$\psi(u)$};
 \node at (1.3,2.5) {$u$};
   \node at (2.3,2.5) {$-u$};       

 \draw [thick] (0.3,0.5) rectangle (0.5,2);
 \draw [->,thick] (0.3,1.25) to (0,1.25);
 \node at (0.4,0.1) {$\Pi$};

 \draw [thick] (2.5,0.5) rectangle (2.7,2);
 \draw [<-,thick] (2.7,1.25) to (3,1.25);
 \node at (2.6,0.1) {$\Pi$};

\end{scope}

  \end{tikzpicture}
  \caption{Examples for dressed solutions. The first example involves
    an ``LKL''-type dressing, which on the level of MPS corresponds to
  the action with a transfer matrix. Generally we can perform multiple
  instances of these dressings, but the resulting MPS might
  have invariant subspaces. Irreducible representations of the twisted
  Yangian are then obtained using a projector $\Pi$ onto an
  irreducible sup-space in the tensor product of auxiliary spaces.}
  \label{fig:dressing}
\end{figure}
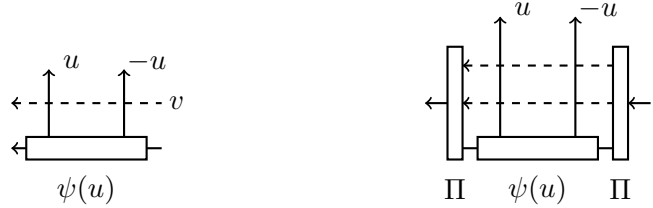

It is immediately seen that the MPS obtained from the dressed two-site
block \eqref{dressedpsi} will be
\begin{equation}
  \label{dressedMPS}
  \ket{\tilde \Psi}=t^{(\Lambda)}(v) \ket{\Psi}.
\end{equation}
This idea to construct new integrable MPS by the action of
transfer matrices already appeared in
\cite{zarembo-neel-cite3,sajat-integrable-quenches}. The commutativity of the transfer
matrices can be used to obtain an immediate proof of the integrability of the ``dressed'' MPS.

In principle any finite number of dressings can be performed leading
to
\begin{equation}
  \label{dressedK2}
   \tilde K(u)=\left(\prod_{j=1}^n  \mathcal{L}^{(0,\Lambda_j)}(v_j-u)\right)
   K(u) \left(\prod_{j=n}^1 (\mathcal{L}^{(0,\Lambda_j)}(v_j+u))^t\right)
\end{equation}
and MPS' of the form
\begin{equation}
\prod_{j=1}^n  t^{(\Lambda_j)}(v_j) \ket{\Psi^\Omega}.
\end{equation}
However, these MPS are not necessarily irreducible, even if we used
irreducible representations $\Omega$ and $\Lambda_j$ of
$\mathfrak{o}(N)$ and $\mathfrak{gl}(N)$. A well known example for
reducibility is when the product of transfer matrices has invariant
subspaces in the tensor product of auxiliary spaces: this happens
during the so-called fusion procedure. However, this is not the only
possibility: the dressed two-site blocks can have invariant subspaces
even if a single product $\prod_{j=1}^n  \mathcal{L}_{ac}^{(0,\Lambda_j)}(v_j-u)$ constitutes
an irreducible representation of the Yangian. The reason is simply
that the double product together with the root solution has less
symmetries and thus can lead to additional invariant subspaces.

It is proven in \cite{molev-representations} that all finite
dimensional representations of the twisted Yangian are obtained by the
projection method from dressed solutions like \eqref{dressedK2}: to
obtain irreducible solutions one
needs to project to the subspace generated by the so-called highest
weight vector within the tensor product of the various auxiliary spaces. 
In practical terms this means that each irreducible solution is of the form
\begin{equation}
  \label{dressedK3}
   \tilde K(u)=\Pi \left(\prod_{j=1}^n  \mathcal{L}^{(0,\Lambda_j)}(v_j-u)\right)
   K_\Omega(u) \left(\prod_{j=n}^1
     (\mathcal{L}^{(0,\Lambda_j)}(v_j+u))^t\right) \Pi,
\end{equation}
where $\Pi$ is some projector acting on
$V_{\Lambda_n}\otimes\dots\otimes V_{\Lambda_1}\otimes V_\Omega$. For a graphical
interpretation see the second figure in \ref{fig:dressing}.

An equivalent interpretation of this statement is, that the dressed
MPS' can always be expressed as sums of irreducible MPS as
\begin{equation}
  \prod_{j=1}^n  t^{(\Lambda_j)}(v_j) \ket{\Psi^\Omega}=\sum_{j} \ket{\Psi_j},
\end{equation}
and every finite dimensional irreducible MPS satisfying also
\eqref{symrel} will be included in one of these sums.

Our goal in this paper is to find those solutions to the BYB that
describe our special one-site invariant MPS, listed at the end of
Section \ref{sec:iMPS}.
One possible strategy would be to investigate the general solution \eqref{dressedK3}, to
explicitly compute
the initial values at $u=0$, and to identify those solutions that
describe the MPS' with a concrete one-site block $\omega$. However, in the present paper we
use a different method: we
explicitly solve the sq.r.r. starting from a given integrable
$\omega$. By the arguments of Section \ref{sec:inte} they also satisfy
the twisted BYB and the symmetry relations, and they will
constitute an irreducible representation of the twisted Yangians.

\section{Solutions for the symmetric pair $(SU(N),SO(N))$}

\label{sec:su}

In this section we analyze the solutions of the
sq.r.r. \eqref{gyokBYB} in the case of the symmetric pair $(SU(N),SO(N))$.

In this case the $R$-matrix is given by \eqref{SUNR} and the sq.r.r. can be written in the form
\begin{equation}
  \label{rec1}
  \omega_c\psi_{ba}-\psi_{cb}\omega_a=u[\psi_{ca},\omega_b].
\end{equation}
We are looking for polynomial solutions of finite order:
\begin{equation}
  \psi_{ab}(u)=\omega_a\omega_b+\sum_{j=1}^m \psi^{(j)}_{ab}(u) u^j.
\end{equation}
The leading coefficient needs to be a scalar with $SO(N)$ symmetry, so
we have
\begin{equation}
  \label{highor}
  \psi^{(m)}_{ab}=\delta_{ab}.
\end{equation}
In the following we will assume that the $\omega$ matrices are invertible.

Eq. \eqref{rec1} can be used recursively to fix the
coefficients. We get
\begin{equation}
  \label{rec2}
   \omega_c\psi^{(j)}_{ba}-\psi^{(j)}_{cb}\omega_a=[\psi^{(j-1)}_{ca},\omega_b],
   \qquad j=1\dots m,
\end{equation}
where it is understood that $\psi^{(0)}_{ab}=\omega_a\omega_b$ and the
highest order equation corresponding to $j=m+1$ is trivially satisfied
with \eqref{highor}.

The linear operator on the l.h.s. of \eqref{rec2} has a null space,
and according to Proposition \ref{psi0} this space is one dimensional
with a basis vector given by $\psi^{(0)}_{ab}=\omega_a\omega_b$.  It
follows that if we find a concrete solution $\psi^{(j)}_{ba}$ 
the inhomogeneous equation at order $j$ then most the general solution will
be
\begin{equation}
  \psi^{(j)}_{ab}+\alpha \omega_a\omega_b,\qquad \alpha\in\complex.
\end{equation}

\begin{prop}
  The linear term is
  \begin{equation}
    \label{linear}
    \psi^{(1)}_{ab}=\omega_b\omega_a+\alpha \omega_a\omega_b
  \end{equation}
  with some $\alpha\in\complex$.
\end{prop}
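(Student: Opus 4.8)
The plan is to verify directly that the proposed expression $\psi^{(1)}_{ab}=\omega_b\omega_a+\alpha\,\omega_a\omega_b$ solves the order-1 recursion \eqref{rec2} with $j=1$, namely
\begin{equation*}
  \omega_c\psi^{(1)}_{ba}-\psi^{(1)}_{cb}\omega_a=[\psi^{(0)}_{ca},\omega_b]=[\omega_c\omega_a,\omega_b],
\end{equation*}
and then invoke the already-established structure of the solution space. By Proposition \ref{psi0} (stated above as Theorem \ref{psi0}), the homogeneous version of this linear equation has a one-dimensional null space spanned by $\psi_{ab}=\omega_a\omega_b$; hence once I exhibit one particular solution, the general solution is that particular solution plus an arbitrary multiple of $\omega_a\omega_b$, which is exactly the form claimed (the $\alpha$ in \eqref{linear} absorbs this freedom). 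So the real content is to find one particular solution and check it.

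First I would compute $\omega_c(\omega_b\omega_a)-(\omega_c\omega_b)\omega_a$, which vanishes identically by associativity. Therefore the ansatz $\psi^{(1)}_{ab}=\omega_b\omega_a$ (i.e. $\alpha=0$) is a candidate particular solution, and the equation to verify reduces to
\begin{equation*}
  [\omega_c\omega_a,\omega_b]=\omega_c\omega_a\omega_b-\omega_b\omega_c\omega_a.
\end{equation*}
Wait --- the left side of \eqref{rec2} with this ansatz is $0$, while the right side $[\omega_c\omega_a,\omega_b]$ need not vanish. So $\psi^{(1)}_{ab}=\omega_b\omega_a$ alone cannot be the full story, and I would instead look for a particular solution by a more systematic route: plug in the general first-degree ansatz built from the available $SO(N)$-covariant tensors (the only bilinears in $\omega$ carrying indices $a,b$ are $\omega_a\omega_b$ and $\omega_b\omega_a$, since $\delta_{ab}\mathbb{1}$ is excluded at this order by the degree count $m>1$), write $\psi^{(1)}_{ab}=\beta\,\omega_a\omega_b+\gamma\,\omega_b\omega_a$, substitute into \eqref{rec2}, and use associativity to collapse both sides. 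The left side becomes $\beta(\omega_c\omega_a\omega_b-\omega_c\omega_b\omega_a)+\gamma(\omega_c\omega_b\omega_a-\omega_c\omega_b\omega_a)=\beta(\omega_c\omega_a\omega_b-\omega_c\omega_b\omega_a)$, and I want this to equal $\omega_c\omega_a\omega_b-\omega_b\omega_c\omega_a$.

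The matching step is the delicate one: naively $\beta=1$ forces $\omega_c\omega_b\omega_a=\omega_b\omega_c\omega_a$, i.e. $[\omega_c,\omega_b]\omega_a=0$, which is false in general. This signals that I have not used all the constraints: \eqref{rec2} must hold \emph{for all} $a,b,c$ simultaneously, and the correct move is to also exploit the symmetry relation \eqref{symrel}, or equivalently to note that $\psi^{(1)}_{ab}$ is only determined up to the null-space vector and that the equation \eqref{rec1} itself (not just its recursive decomposition) constrains $\psi^{(1)}$ through consistency at higher orders. The cleanest argument is probably: substitute $\psi_{ab}(u)=\omega_a\omega_b+u\,\psi^{(1)}_{ab}+O(u^2)$ into \eqref{rec1}, read off that $\psi^{(1)}$ must satisfy \eqref{rec2}, then observe that $\check R'(0)$ applied to $\omega\cdot\omega$ via \eqref{symrel} already pins down $\psi^{(1)}(0)$; in the $SU(N)$ case $\check R(u)=P(P+u)/(1+u)=(1+uP)/(1+u)$ so $\check R'(0)=P-\mathbb 1$, giving $\psi'(0)=(P-\mathbb 1)(\omega\cdot\omega)$, i.e. $\psi^{(1)}_{ab}=\omega_b\omega_a-\omega_a\omega_b$. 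Then the remaining freedom $\alpha\,\omega_a\omega_b$ from the null space of \eqref{rec2} yields $\psi^{(1)}_{ab}=\omega_b\omega_a-\omega_a\omega_b+\alpha'\omega_a\omega_b=\omega_b\omega_a+\alpha\,\omega_a\omega_b$ after redefining $\alpha=\alpha'-1$. I expect the main obstacle to be precisely disentangling which of \eqref{rec2}, \eqref{symrel}, and the null-space statement of Proposition \ref{psi0} is doing the work; once it is clear that \eqref{symrel} fixes $\psi^{(1)}(0)$ and Proposition \ref{psi0} supplies the one-parameter ambiguity, the computation is a two-line check that $\omega_b\omega_a-\omega_a\omega_b$ solves \eqref{rec2}, which follows from expanding $[\omega_c\omega_a,\omega_b]$ and using associativity together with the fact that $\omega_c(\omega_b\omega_a-\omega_a\omega_b)-(\omega_c\omega_b-\omega_b\omega_c)\omega_a=\omega_b\omega_c\omega_a-\omega_c\omega_a\omega_b+\ldots$ telescopes to $\omega_c\omega_a\omega_b-\omega_b\omega_c\omega_a=[\omega_c\omega_a,\omega_b]$.
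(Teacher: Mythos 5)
Your overall strategy --- exhibit one particular solution of \eqref{rec2} at $j=1$ and then add the one-dimensional null space spanned by $\omega_a\omega_b$ --- is exactly the paper's proof, which consists of the single observation that $\alpha=0$ works. The problem is that your verification is derailed by an index-substitution error, and everything downstream of it (the rejection of the candidate, the claim that the $\beta,\gamma$ ansatz forces $[\omega_c,\omega_b]\omega_a=0$, and the detour through \eqref{symrel}) is a consequence of that error rather than a genuine feature of the problem. Concretely: with the candidate $\psi^{(1)}_{ab}=\omega_b\omega_a$ one must evaluate $\psi^{(1)}_{ba}$ by exchanging the roles of $a$ and $b$ in the defining formula, so $\psi^{(1)}_{ba}=\omega_a\omega_b$ and $\psi^{(1)}_{cb}=\omega_b\omega_c$, not $\omega_b\omega_a$ and $\omega_c\omega_b$ as you wrote (what you actually computed is the left-hand side for the null vector $\omega_a\omega_b$, which of course vanishes). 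The correct substitution gives
\begin{equation*}
\omega_c\psi^{(1)}_{ba}-\psi^{(1)}_{cb}\omega_a
=\omega_c\omega_a\omega_b-\omega_b\omega_c\omega_a
=[\omega_c\omega_a,\omega_b],
\end{equation*}
which is precisely the right-hand side of \eqref{rec2} with $\psi^{(0)}_{ca}=\omega_c\omega_a$. So the $\alpha=0$ candidate does work, there is no obstruction, and the Proposition follows at once from the null-space statement you already quoted. The same slip propagates into your general ansatz: done correctly, the left-hand side with $\psi^{(1)}_{ab}=\beta\,\omega_a\omega_b+\gamma\,\omega_b\omega_a$ collapses to $\gamma[\omega_c\omega_a,\omega_b]$, forcing $\gamma=1$ with $\beta$ free --- again the claimed result. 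Your final ``telescoping'' display carries the same index reversal and actually evaluates to $-[\omega_c\omega_a,\omega_b]$.

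The fallback via \eqref{symrel} is not wrong as a consistency check --- in the $SU(N)$ case $\check R'(0)=P-\mathbb{1}$ indeed gives $\psi'(0)_{ab}=\omega_b\omega_a-\omega_a\omega_b$, i.e.\ the $\alpha=-1$ member of the family --- but it is logically weaker than what is needed: \eqref{symrel} is established in the paper only under the assumption that the solution of the sq.r.r.\ is unique and after a choice of normalization, whereas the Proposition is a statement about the recursion \eqref{rec2} alone and holds without either hypothesis. Likewise, the restriction of the ansatz to the two bilinears is unnecessary: once one particular solution is in hand, the general solution is that solution plus the one-dimensional null space, with no need to argue that other covariant structures are excluded.
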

\begin{proof}
  It is easy to see that the solution with $\alpha=0$ satisfies
  \eqref{rec2}, thus we get the general solution as given above.
\end{proof}

Simple solutions are obtained by restrictions on the highest degree
$m$. In the linear case we get:

\begin{prop}
If the $\omega$ matrices are invertible then the only linear solution
is (up to overall re-scaling) $\omega_a=\gamma_a$ and
\begin{equation}
  \label{gammasolution}
    \psi_{ab}(u)=\gamma_a\gamma_b+2u\delta_{ab},
  \end{equation}
  where the $\gamma$ matrices satisfy the $N$ dimensional Clifford
  algebra:
  \begin{equation}
    \{\gamma_a,\gamma_b\}=2\delta_{ab}.
  \end{equation}
\end{prop}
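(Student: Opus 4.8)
The plan is to pin down the linear coefficient of $\psi$ completely and then reduce the whole statement to one algebraic identity. A ``linear solution'' is a degree-one polynomial $\psi_{ab}(u)=\omega_a\omega_b+u\,\psi^{(1)}_{ab}$, where $\psi_{ab}(0)=\omega_a\omega_b$ is fixed by Proposition \ref{psi0}. By the previous Proposition, the coefficient $\psi^{(1)}_{ab}$ is necessarily of the form \eqref{linear}, i.e. $\psi^{(1)}_{ab}=\omega_b\omega_a+\alpha\,\omega_a\omega_b$ for some $\alpha\in\complex$. On the other hand $\psi^{(1)}_{ab}$ is the top coefficient of a finite-degree solution of an irreducible MPS, so by the $u\to\infty$ argument it is a scalar matrix, and combined with its $SO(N)$-covariance (it is built from products of two $\omega$'s) it must be proportional to $\delta_{ab}$ — this is exactly \eqref{highor}, up to the overall normalization. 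Hence everything reduces to the single operator identity
\begin{equation*}
  \omega_b\omega_a+\alpha\,\omega_a\omega_b=\mu\,\delta_{ab}\,\mathbb{1},\qquad a,b=1,\dots,N,
\end{equation*}
for some scalar $\mu$; and since $\psi^{(1)}_{ab}$ is then a scalar, the coefficient of $u^2$ in \eqref{rec1} vanishes automatically, so any degree-one $\psi$ obeying this identity genuinely solves the sq.r.r.

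The core of the proof is then a short case analysis in $\alpha$. Transposing $a\leftrightarrow b$ in the identity and adding/subtracting gives, for $a\neq b$, the relations $(1+\alpha)\{\omega_a,\omega_b\}=0$ and $(1-\alpha)[\omega_a,\omega_b]=0$. If $\alpha\notin\{1,-1\}$ both hold, forcing $\omega_a\omega_b=0$, which contradicts invertibility of the $\omega$'s (assuming $N\ge2$); the value $\alpha=-1$ is excluded by the diagonal case $a=b$, which becomes $0=\mu\,\mathbb{1}$, i.e. $\mu=0$ and $\psi^{(1)}\equiv0$ — a degenerate (degree-zero, one-dimensional auxiliary space) solution, not what is meant by a linear solution. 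Thus $\alpha=1$, and the identity becomes $\{\omega_a,\omega_b\}=\mu\,\delta_{ab}\,\mathbb{1}$ with $\mu\neq0$ (else $\omega_a^2=0$). Rescaling $\gamma_a:=\sqrt{2/\mu}\,\omega_a$ puts this in canonical form $\{\gamma_a,\gamma_b\}=2\delta_{ab}$, and back-substitution gives $\psi_{ab}(u)=\omega_a\omega_b+\mu u\,\delta_{ab}=\tfrac{\mu}{2}\bigl(\gamma_a\gamma_b+2u\,\delta_{ab}\bigr)$, which is \eqref{gammasolution} up to the overall factor $\mu/2$. For the converse (existence) I would substitute \eqref{gammasolution} into \eqref{rec1} and use $\gamma_a\gamma_b=2\delta_{ab}-\gamma_b\gamma_a$ to reduce the cubic terms; both sides collapse to $2u(\delta_{ab}\gamma_c-\delta_{bc}\gamma_a)$. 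Any finite-dimensional representation of the Clifford algebra $\mathrm{C}\ell_N$ supplies such matrices, so the solution exists for every $N$.

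The genuinely delicate points are bookkeeping rather than mathematics: reconciling the two normalization conventions in play (the one fixing $\psi_{ab}(0)=\omega_a\omega_b$ versus the one fixing the leading coefficient to be literally $\delta_{ab}$), and being precise enough about the term ``linear solution'' that the degenerate $\alpha=-1$, $\mu=0$ branch — formally a solution of \eqref{rec1} on a one-dimensional auxiliary space — is discarded rather than appearing as a spurious second family. Once the identity $\omega_b\omega_a+\alpha\,\omega_a\omega_b=\mu\,\delta_{ab}\,\mathbb{1}$ is in hand, the remaining work (the dichotomy in $\alpha$ and the rescaling to the canonical Clifford normalization) is entirely elementary.
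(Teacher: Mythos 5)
Your proof is correct and follows essentially the same route as the paper: equate the linear coefficient $\omega_b\omega_a+\alpha\,\omega_a\omega_b$ (from the preceding Proposition) with the scalar leading term $\beta\,\delta_{ab}$ forced by \eqref{highor}, deduce $\alpha=1$ from the symmetry of the right-hand side, and rescale to the Clifford normalization. Your case analysis in $\alpha$ is in fact slightly more careful than the paper's one-line ``the r.h.s. is symmetric, yielding $\alpha=1$'', since you explicitly dispose of the degenerate branches $\alpha\notin\{1,-1\}$ and $\alpha=-1$ using invertibility of the $\omega_a$.
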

\begin{proof}
In this case the linear term \eqref{linear} has to coincide with
\eqref{highor}, thus
\begin{equation}
  \omega_b\omega_a+\alpha \omega_a\omega_b=\beta \delta_{ab}.
\end{equation}
The r.h.s. is symmetric with respect to $a,b$, yielding
$\alpha=1$. With a proper normalization of $\omega_a$ we can set
$\beta=2$, and thus we obtain the Clifford algebra.
\end{proof}

In this case the solution can be written alternatively as
\begin{equation}
  \psi_{ab}(u)=(2u+1)\tilde\psi_{ab}(u),\qquad
  \tilde\psi_{ab}(u)=\delta_{ab}+[\gamma_a,\gamma_b]/4(u+1/2)^{-1}.
\end{equation}
Here we recognize the generators of $SO(N)$ in the spinor
representation given by $F_{ab}=[\gamma_a,\gamma_b]/4$, thus this
solution is equal to a ``root solution'' as given by
\eqref{rootsolutions}. We
remind that the corresponding MPS is one of our main examples listed
in Section \ref{sec:iMPS}.

Now we consider the quadratic case. If $\psi_{ab}(u)$ is of second order, then for $j=2$ in \eqref{rec2}
\begin{equation}
  [\omega_a\omega_c+\alpha \omega_c\omega_a,\omega_b]=
  \beta \left[\omega_c\delta_{ba}-\delta_{cb}\omega_a\right]
\end{equation}
with some $\alpha,\beta\in\complex$, where $\beta\ne 0$ is the
coefficient of $\delta_{ab}$. The r.h.s. is
anti-symmetric with respect to $a,c$, therefore $\alpha=-1$ or
$\{\omega_a,\omega_c\}$ commutes with all $\omega_b$. The latter case
leads us back to the Clifford algebra, therefore we are free to set
$\alpha=-1$. After an exchange of indices and setting $\beta=-1$ we
obtain the exchange relations
\begin{equation}
  \label{trirel}
\Big[\omega_a,[\omega_b,\omega_c]\Big]=\delta_{ab}\omega_c-\delta_{ac}\omega_b.
\end{equation}
In this case the second order solution is
\begin{equation}
  \psi_{ab}(u)=\omega_a\omega_b-u[\omega_a,\omega_b]-u^2\delta_{ab}.
\end{equation}

We construct explicit realizations of the algebra
\eqref{trirel}. Consider first any triplet of the form
$\{\omega_j,\omega_k,\pm i[\omega_j,\omega_k]\}$, $j\ne k$. It follows from
\eqref{trirel} that they satisfy the $SU(2)$-algebra.
In those cases
when the full set $\{\omega_j\}$ is a Lie-algebra (closed with respect
to the commutator), it is easily shown that the $SU(2)$-algebra is the
only possibility, thus we obtain the solution for $N=3$ 
\begin{equation}
  \label{higherZarembo}
  \psi_{ab}(u)=S_aS_b-u[S_a,S_b]-u^2\delta_{ab},
\end{equation}
where the $S_a$ are the spin operators in a finite dimensional
irreducible representation.
This solution describes
the MPS given by $\omega_a=S_a$, which is one of our main examples
listed at the end of Section \ref{sec:iMPS}.

In those cases when the set $\{\omega_j\}$ is not a Lie-algebra (the
set is not closed with respect to the commutator), we
can consider the commutation relations of the matrices
$t_{jk}=-t_{kj}\equiv i[\omega_j,\omega_k]$:
\begin{equation}
  \begin{split}
  [t_{jk},t_{lm}]&=-\Big[[\omega_j,\omega_k],[\omega_l,\omega_m]\Big]=
   t_{jl}\delta_{km}+t_{km}\delta_{jl}- t_{jm}\delta_{kl}-t_{kl}\delta_{jm}.
 \end{split}  
\end{equation}
We see that the $t_{jk}$ satisfy the $\mathfrak{o}(N)$
Lie-algebra. Supplied with the additional conditions
\begin{equation}
  [\omega_j,t_{kl}]=i[\delta_{jk}\omega_l-\delta_{jl}\omega_k],\qquad
  [\omega_j,\omega_k]=-i t_{jk},
\end{equation}
it can be seen that the only irreducible solutions are the ``fused
spinor representations'', which are constructed as follows. Let
$\gamma_j$, $j=1\dots N$ be the $N$-dimensional Gamma matrices
satisfying the Clifford algebra. Denoting their vector space as
$V_\gamma$ we construct the symmetrized tensor product spaces
\begin{equation}
  \text{Sym} \left\{\otimes_{j=1}^n V_\gamma \right\}.
\end{equation}
Let $\Pi_n$ be the projector from $\otimes_{j=1}^n V_\gamma$ onto the
symmetrized space. Then we construct the matrices
\begin{equation}
  \label{Gamman}
  \Gamma_j^{(n)}=\Pi_n
  \left[
    \underbrace{\gamma_j\otimes 1\otimes\dots\otimes 1}_{n}+
      \underbrace{1\otimes \gamma_j\otimes\dots\otimes 1}_{n}+\dots
    \right]
  \Pi_n.
\end{equation}
Choosing $\omega_j=\Gamma^{(n)}_j/2$ we see that the commutation
relations are satisfied, thus we obtain a solution
\begin{equation}
  \label{higherGamma}
  \psi_{ab}(u)=\frac{\Gamma^{(n)}_a\Gamma^{(n)}_b}{4}-u
\frac{[\Gamma^{(n)}_a,\Gamma^{(n)}_b]}{4}-u^2\delta_{ab}.
\end{equation}
Note that this solution has the same form as \eqref{higherZarembo} with the
identification $S_a=\Gamma^{(n)}_a/2$. The reason is that the higher
spin $SU(2)$ generators are fused symmetrically from the spin-1/2
generators $\sigma_a/2$, and the Pauli matrices satisfy the Clifford algebra.

It can be checked by direct computation that these solutions satisfy
the un-normalized symmetry relation \eqref{psiRpsi}. For example the
solution \eqref{gammasolution} satisfies \eqref{psiRpsi} with $f(u)=\frac{1+2u}{1-2u}$.

Whereas \eqref{gammasolution} is a known ``root solution'' to the
twisted BYB, the explicit solutions \eqref{higherZarembo} and 
\eqref{higherGamma} are new.

\section{Solutions for $(SO(N),SO(D)\otimes SO(N-D))$}

\label{sec:so}

In this section we analyze the solutions of the
sq.r.r. \eqref{gyokBYB} in the case of the symmetric pair
$(SO(N),SO(D)\otimes SO(N-D))$, assuming irreducibility.

In this case the $R$-matrix is given by \eqref{SONR}, which is
crossing invariant by \eqref{SONcrossing}. This poses an additional
constraint which is most easily derived from the intertwining relation
\eqref{intertwK}. At the special point $u=-c/2$ we have
\begin{equation}
  R^{T}(-c/2)=R(-c/2),
\end{equation}
therefore the two sets of dressed MPS are completely identical, which
by irreducibility implies
\begin{equation}
  \label{extracond}
  K(-c/2)\sim I.
\end{equation}
Furthermore, the crossing and the intertwining relations also imply
the inversion relation
\begin{equation}
  \label{KinvSON}
  K(u)K(-u-c)\sim I.
\end{equation}

If $K(u)$ is a solution to the BYB \eqref{twisted}, then the crossed $K$-matrices defined as
\begin{equation}
  \label{kisk}
  k(u)=K(-u-c/2)
\end{equation}
satisfy the standard BYB
\begin{equation}
  \label{usual}
k_2(v) R_{21}(u+v) k_1(u)R_{12}(v-u)=
  R_{21}(v-u)k_1(u) R_{12}(u+v)k_2(v).
\end{equation}
Condition \eqref{extracond} then translates into the usual initial
condition $k(0)\sim 1$.

Instead of a systematic analysis of all possible solutions we
construct a few particular ones.

\subsection{One-site states}

Here we assume an MPS with bond dimension one, i.e. a one-site product
state given by a vector $\omega_a$.

We make the following Ansatz:
\begin{equation}
  \psi_{ab}(u)=f(u)\omega_a\omega_b+g(u) \delta_{ab},
\end{equation}
and we assume
\begin{equation}
    \label{oman}
    \omega_a\omega_a=1.
\end{equation}
The sq.r.r. reads:
\begin{equation}
  \begin{split}
  (u+c)\omega_c\psi_{ba}(u)+u(u+c)\omega_b \psi_{ca}(u)-u
  \delta_{cb} \omega_d \psi_{da}(u)=\\
 (u+c) \psi_{cb}(u)\omega_a+u(u+c) \psi_{ca}(u)\omega_b-u
 \delta_{ba} \psi_{cd}(u)\omega_d.
  \end{split}
\end{equation}

This gives the solution
\begin{equation}
  f(u)=2u+c\qquad g(u)=-u.
\end{equation}
Allowing an arbitrary normalization for $\omega$ we get
\begin{equation}
  \label{SON1sitestate}
  \psi_{ab}(u)=(2u+c) \omega_a\omega_b-u (\omega_d\omega_d) \delta_{ab}.
\end{equation}
This shows that in the limit of $\omega_d\omega_d\to 0$ the second term
decouples and thus the first one can be chosen as a constant.

After constructing the  $K$ matrix as \eqref{Kpsi} it is easy to see
that the inversion relation \eqref{KinvSON} is satisfied with the following proportionality
factor:
\begin{equation}
  K(u)K(-u-c)=(-u)(u+c).
\end{equation}

\subsection{Two-site states}

For sake of completeness we give here the general scalar valued solution of the twisted BYB
with the given symmetry properties. This solution describes a two-site
state, therefore the sq.r.r. can not be applied here. 

The solutions of the BYB relation are classified in
\cite{SON-R-matrices-classification-MORICONI} and summarized for
example in \cite{Gombor-SON-nonstandard}. In our conventions the
relevant solution can be written as
\begin{equation}
  K(u)=
  \begin{pmatrix}
   (N-D-1+2u) I_{D}& 0\\
    0 & (-D+1-2u) I_{N-D} 
  \end{pmatrix},
\end{equation}
where $I_D$ and $I_{N-D}$ stand for identity matrices of dimension
$D$, and $N-D$, respectively. It is easy to see that the inversion
relation \eqref{KinvSON} is satisfied.

In the special case of $D=1$ we get (a scalar multiple of) the one-site
state solution given in \eqref{SON1sitestate} with the vector $\omega=(1,0,\dots,0)$.

\subsection{Matrix Product States}

To simplify notations we split the full vector space $\complex^N$ into a direct
sum of $\complex^D$ and $\complex^{N-D}$.
We will use indices $a,b,\ldots=1\dots D$ describing the coordinates
in the first component, and $I,J,\ldots =1\ldots (N-D)$ for the second.

We investigate the MPS given by
\begin{equation}
  \omega_a=\gamma_a \qquad \omega_I=0,
\end{equation}
where the $\gamma_a$ are the $D$-dimensional Gamma matrices.

We make the following Ansatz for the solution of the sq.r.r.:
\begin{equation}
  \begin{split}
    \psi_{ab}&=f(u)\gamma_a\gamma_b+g(u)\delta_{ab}\\
    \psi_{aI}&=\psi_{Ia}=0\\
   \psi_{IJ}&=h(u)\delta_{IJ}.
    \end{split}
  \end{equation}

  We now investigate the different components of the sq.r.r.
  If all indices are in the first subgroup then we get
\begin{equation}
  \begin{split}
  (u+c)\omega_c\psi_{ba}(u)+u(u+c)\omega_b \psi_{ca}(u)-u
  \delta_{cb} \omega_d \psi_{da}(u)=\\
 (u+c) \psi_{cb}(u)\omega_a+u(u+c) \psi_{ca}(u)\omega_b-u
 \delta_{ba} \psi_{cd}(u)\omega_d.
  \end{split}
\end{equation}

Substituting our Ansatz and observing the cancellation of a few terms
we get
\begin{equation}
  \begin{split}
  &  (u+c)
\delta_{ab}\gamma_cg(u)
    +u(u+c)\gamma_b\gamma_c\gamma_a f(u)-u
      \delta_{cb} \gamma_d  (f(u)\gamma_d\gamma_a+g(u)\delta_{ad})=
\\ &      (u+c) \delta_{bc}      \gamma_a  g(u)
      +u(u+c)\gamma_c\gamma_a     \gamma_b   f(u)
      -u \delta_{ba}  (f(u)\gamma_c\gamma_d+g(u)\delta_{cd})     \gamma_d.
  \end{split}
\end{equation}
We can use $\gamma_j\gamma_j=D$ to obtain
\begin{equation}
  \begin{split}
&\left(    \delta_{ab}\gamma_c-   \delta_{bc}      \gamma_a\right)\left( (2u+c)g(u)+Duf(u)\right)-\\
&
  +u(u+C)(\gamma_b\gamma_c\gamma_a-\gamma_c\gamma_a     \gamma_b)f(u)=0.
\end{split}
\end{equation}
Making use of the Clifford algebra relations we end up with 
\begin{equation}
\left(    \delta_{ab}\gamma_c-   \delta_{bc}
  \gamma_a\right)\left( (2u+c)g(u)+(-2u^2-2cu+Du)f(u)\right)
=0.
\end{equation}
A solution is 
\begin{equation}
  f(u)=2u+c\qquad g(u)=2u^2+(-D+2c)u.
\end{equation}

For the remainder we only need to check cases when two out of three
indices belong to $IJ$, because the other possibilities are automatically zero.

When the indices are specified as $cIJ$ we get
\begin{equation}
(u+C)\gamma_c\delta_{IJ}h(u)=-u\delta_{IJ}
(f(u)\gamma_c\gamma_d+g(u)\delta_{cd}) \gamma_d  =
-u\delta_{IJ}(Df(u)+g(u))\gamma_c
\end{equation}
giving
\begin{equation}
  h(u)=-u(D+2u).
\end{equation}
It is easy to see that all other cases are satisfied are well. The
solution is thus
\begin{equation}
  \begin{split}
    \psi_{ab}(u)&=(2u+c)  \gamma_a\gamma_b+(2u^2+(-D+2c)u)\delta_{ab}\\
    \psi_{aI}(u)&=\psi_{Ia}(u)=0\\
   \psi_{IJ}(u)&=-u(D+2u)\delta_{IJ}.
    \end{split}
  \end{equation}
It is easy to see that the condition \eqref{extracond} is indeed satisfied.
  
Setting $D=N$ and substituting $c=N/2-1$ we obtain the completely
$SO(N)$-invariant solution
\begin{equation}
     \psi_{ab}(u)=(2u+N/2-1)  \gamma_a\gamma_b+(2u^2-2u)\delta_{ab}.
\end{equation}
  
In the special case of $D=3$ (and arbitrary $N$) the Clifford generators are given by the
Pauli matrices. For $N=6$
we get a particular solution
\begin{equation}
  \begin{split}
    \psi_{ab}(u)&=(2u+2)  \sigma_a\sigma_b+(2u^2+u)\delta_{ab}\\
    \psi_{aI}(u)&=\psi_{Ia}(u)=0\\
   \psi_{IJ}(u)&=-u(3+2u)\delta_{IJ}.
    \end{split}
  \end{equation}
This solution is relevant to one-point functions in AdS/CFT
\cite{adscft-kristjansen-2017,kristjansen-proofs}. After the crossing transformation \eqref{kisk} we get a
particular solution $k(u)$ to the untwisted BYB, which was obtained earlier by
DeWolfe and Mann in \cite{dewolfe-mann}.

Direct computation shows that these solutions satisfy
the un-normalized symmetry relation \eqref{psiRpsi}. For example the simplest
solution \eqref{SON1sitestate} satisfies \eqref{psiRpsi} with $f(u)=\frac{c+2u}{c-2u}$.

\section{Conclusions}

\label{sec:conclusions}

In this work we treated integrable MPS and established a new linear
intertwining relation \eqref{intertwK}-\eqref{gyokBYB} called the ``square root relation'' which
guarantees the integrability property. We showed that under
certain conditions the solutions to the sq.r.r. also solve the
twisted BYB relation \eqref{twisted}. The sq.r.r. is only linear as
opposed to the quadratic BYB, thus it is much easier to solve.
From the general point of view of boundary integrability, we can
expect that any solution of the twisted BYB which factorizes at the
special point $u=0$ (see the initial condition \eqref{init}) also
solves the sq.r.r. with the corresponding one-site object $\omega$.

The question of under which circumstances the three different relations (the integrability
condition, and the sq.r.r. and twisted BYB) are completely equivalent
remains open. The connection between these three properties is
depicted in Fig. \ref{fig:rel}.
We have shown that the three conditions are equivalent if certain
dressed MPS are irreducible. However, there are solutions
which satisfy all three conditions without the complete reducibility (for example
the reference state in the $SU(N)$-invariant model), so perhaps this
condition can be weakened. It would be desirable to clarify
this issue. Also, it would be important to develop analytic proofs for
the irreducibility condition, which we confirmed only numerically in a
few concrete cases. 
The reducibility properties of the dressed MPS are
related to the reducibility of the fused representations of the twisted
Yangian, for which there are no general results available. These
particular problems deserve further work. 

\begin{figure}[]
  \centering
  \begin{tikzpicture}[scale=0.7]
    \node (a) at (0,0) {\footnotesize Integrability conditions \eqref{int1}-\eqref{int2}};

\node (b) at (7.5,2) {\footnotesize Square root relation \eqref{gyokBYB}};

\node at (7.5,-2) {\footnotesize BYB relation \eqref{BYB}};

\draw [thick,->] plot [smooth] coordinates { (4.5,2)  (3,1.8) (0,0.5)};
\draw [thick,->] plot [smooth] coordinates { (5,-2)  (3,-1.8) (0,-0.5)};

\draw [thick,dashed,->] plot [smooth] coordinates {(3.8,0.2) (5,1)  (6,1.7)};
\draw [thick,dashed,->] plot [smooth] coordinates {(3.8,-0.2) (5,-1) (6,-1.7)};

\draw [thick,dashed,->] plot [smooth] coordinates {(7.5,-1.7) (8,0)  (7.5,1.7)};
\draw [thick,dashed,->] plot [smooth] coordinates {(7,1.7) (6.5,0) (7,-1.7)};
\end{tikzpicture}
  \caption{Relation between the different notions of integrability for
    a one-site invariant MPS. The thick lines stand for direct implication,
    whereas the dashed lines show implication given the irreducibility
  condition of the dressed MPS, as explained in the main text.}
  \label{fig:rel}
\end{figure}
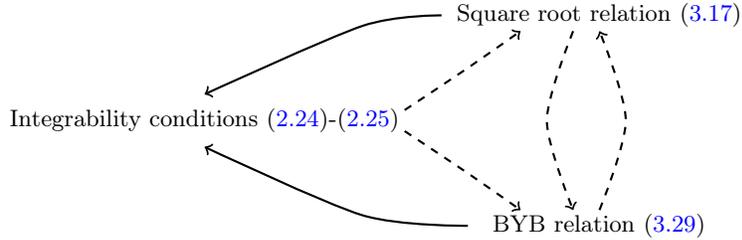

In Sections \ref{sec:su}-\ref{sec:so} we have provided solutions of the sq.r.r. for the integrable MPS that
were listed at the end of Section \ref{sec:iMPS}.
Moreover, in \ref{sec:su} we classified all solutions to the
sq.r.r. having the symmetry $(SU(N),SO(N))$ which are at most
quadratic polynomials in the rapidity. The linear solutions
necessarily lead the Clifford algebra, whereas the quadratic solutions
correspond to symmetrically fused Clifford generators.
On the other hand, this list does not
exhaust all known integrable MPS: the fused states for the pair
$(SO(N),SO(D)\otimes SO(N-D))$ (see
\cite{adscft-kristjansen-2017,ads-neel-cite-kristjansen-so5,kristjansen-proofs}
for details) were not treated here.

Having found our solutions to the sq.r.r., and having established that
they also solve the twisted BYB there are the following open directions.

First, one should establish the explicit fusion hierarchy of
these solutions, which could lead to an understanding of the physical
overlaps using the combination of the Quench Action and QTM
methods. These computations would be analogous to the ones of
\cite{sajat-minden-overlaps,sajat-su3-2}. 

Second, one should understand better the explicit relations with the
representation theory of the twisted Yangian. In particular, it would
be useful to understand the exact decomposition of the form
\eqref{dressedK3}, together with the projector $\Pi$. Preliminary
computations show that our solution \eqref{higherZarembo} can always be
obtained by a single fusion of LKL type and a non-trivial
projection. Determining the decomposition could also help in
understanding the overlap formulas obtained in \cite{ADSMPS2,adscft-kristjansen-2017,kristjansen-proofs}.

Finally, it would be interesting to obtain the spectrum and the
eigenstates of the double row transfer matrices with the new boundary
$K$-matrices. As we remarked, these transfer
matrices can lead to new integrable models with local Hamiltonians and additional
boundary degrees of freedom. 

We plan to return to these questions in further research.

\vspace{1cm}
{\bf Acknowledgments} 

\bigskip

We would like to thank Tam\'as Gombor, Ben Grossmann, Charlotte Kristjansen, Marius de Leeuw,
Georgios Linardopoulos, 
Vidas Regelskis, G\'abor Tak\'acs, Kasper
Vardinghus and Matthias Wilhelm for useful discussions. BP was supported
by the BME-Nanotechnology FIKP grant of EMMI (BME 
FIKP-NAT), by the National Research Development and Innovation Office
(NKFIH) (K-2016 grant no. 119204 and KH-17 grant no. 125567), and by
the Premium Postdoctoral Program of the Hungarian Academy of
Sciences. EV acknowledges support by the EPSRC under grant
EP/N01930X. 

\appendix

\section{The different forms of the square root relation}

\label{app:basszus}

Here we provide a detailed derivation showing that
\eqref{gyokBYB00} follows from \eqref{intertwK}. 

We start with the intertwining relation
 \begin{equation}
    A_jK(u)=K(u)B_j,\quad j=1,\dots,N,
  \end{equation}
  which is a relation in $End(V_0\otimes V_A)$ and it concerns the
  dressed MPS defined as
\begin{equation}
  \begin{split}
    A_j&=\sum_k \mathcal{L}_{jk}(u)\otimes \omega_k \\
    B_j&=\sum_k \mathcal{L}^T_{jk}(u)\otimes \omega_k. \\
  \end{split}
\end{equation}
Here the Lax operators $\mathcal{L}_{jk}(u)$ are defined from the
expansion of the $R$-matrix as
\begin{equation}
  R_{10}(u)=\sum_{ab} E_{ab} \otimes \mathcal{L}_{ab}(u),
\end{equation}
where $E_{ab}$ are the basis matrices acting on a physical space
$V_1\approx \complex^N$ and the matrices
$\mathcal{L}_{ab}(u)$ act on the auxiliary space $V_0$. This expansion
also implies that the matrix elements of the $R$-matrix are
\begin{equation}
  \label{nami}
  R_{bc}^{ad}(u)= (\mathcal{L}_{ab}(u))_{c}^d=\text{Tr}_0\big( E_{dc}\mathcal{L}_{ab}\big).
\end{equation}
In our cases the $R$-matrix is symmetric with respect to an exchange
of the two vector spaces, and also with respect to the exchange of the
in- and out-states, therefore
\begin{equation}
   R_{bc}^{ad}(u)= R_{cb}^{da}(u)= R^{bc}_{ad}(u).
\end{equation}

We also expand the $K$-matrix as
\begin{equation}
K(u)=\sum_{a,b}E_{ab}\otimes \psi_{ab}(u),
\end{equation}
where $E_{ab}$ are elementary matrices acting on $V_0$ and
$\psi_{ab}(u)$ are matrices acting on $V_A$.

Then for every $j=1\dots N$ we get
\begin{equation}
\sum_{a,b,k}
 \Big(\mathcal{L}_{jk}(u)E_{ab}\Big) \otimes \Big(\omega_k  \psi_{ab}(u)\Big)=
\sum_{a,b,k}\Big(E_{ab}\mathcal{L}^T_{jk}(u)\Big)\otimes \Big(\psi_{ab}(u)\omega_k\Big).
\end{equation}
We multiply this equation with $E_{cd}\otimes 1$, where again
$E_{cd}$ acts on $V_0$ and $c,d=1\dots N$ are unspecified indices. Fruthermore, we take the partial trace over
$V_0$ to arrive at
\begin{equation}
\sum_{a,b,k}
 \text{Tr}\big(E_{cd}\mathcal{L}_{jk}(u)E_{ab}\big) \Big(\omega_k  \psi_{ab}(u)\Big)=
\sum_{a,b,k}\text{Tr}\big(E_{cd}E_{ab}\mathcal{L}^T_{jk}(u)\big) \Big(\psi_{ab}(u)\omega_k\Big).
\end{equation}
This can be written as
\begin{equation}
\sum_{a,k}
 \text{Tr}\big(\mathcal{L}_{jk}(u)E_{ad}\big) \Big(\omega_k  \psi_{ac}(u)\Big)=
\sum_{b,k}\text{Tr}\big(\mathcal{L}_{jk}(u)E_{bc}\big) \Big(\psi_{db}(u)\omega_k\Big),
\end{equation}
where we also used $E_{ab}^T=E_{ba}$ for every $a,b=1\dots N$.

Using \eqref{nami} and the symmetries of the $R$-matrix this gives
\begin{equation}
  \sum_{a,k}
R_{jd}^{ka}(u)
  \Big(\omega_k  \psi_{ac}(u)\Big)=
  \sum_{b,k}
R^{kb}_{jc}(u)
  \Big(\psi_{db}(u)\omega_k\Big).
\end{equation}
Introducing also the matrix $\check R(u)=PR(u)$ this is written as
\begin{equation}
  \sum_{a,k}
\check R_{dj}^{ka}(u)
  \Big(\omega_k  \psi_{ac}(u)\Big)=
  \sum_{b,k}
\check R^{bk}_{jc}(u)
  \Big(\psi_{db}(u)\omega_k\Big).
\end{equation}
After an exchange of indices this is indeed equivalent to
\eqref{gyokBYB00}.

\section{Numerical tests of the irreducibility condition}

\label{sec:C1}

Here we describe our simple numerical procedure to test the irreducibility
condition for the dressed MPS defined in \eqref{ABdef}.

Given a set of $d\times d$ matrices $\{A_j\}_{j=1\dots N}$ the goal is to test
whether the linear span of all matrix products of length $L$ becomes
$End(\complex^d)$ for $L\ge L^*$. This can be tested recursively in
$L$. At $L=1$ we investigate the matrices themselves and construct an
orthonormal basis using the scalar product
\begin{equation}
  (A,B)\equiv \text{Tr} A^\dagger B.
\end{equation}
Let us denote this basis by $\{B^{(1)}_j\}_{j=1\dots n_1}$, where $1\le
n_1\le N$. Now we construct the basis at length $L=2$ by computing all
products
\begin{equation}
  A_j B^{(1)}_k,\quad j=1\dots N,\quad k=1\dots n_1
\end{equation}
and performing a new orthogonalization procedure. This gives a basis
$\{B^{(2)}_j\}_{j=1\dots n_2}$, where typically $n_2>n_1$. We continue this
procedure such that at each step we define
\begin{equation}
  \{B^{(L+1)}_j\}_{j=1\dots n_{L+1}}\equiv \text{ basis of }span\{A_j
  B^{(L)}_k\}_{j=1\dots N,k=1\dots n_L}.
\end{equation}
Generally we observe that $n_L$ is always increasing, and the MPS is
irreducible if $n_L$ it reaches
the maximal value $d^2$ at some $L=L^*$. 

We performed this test for a few examples of the dressed matrices \eqref{ABdef} constructed from the
integrable MPS listed at the end of Section \ref{sec:iMPS}; we took at
least one example from each family.
We observed that the dressed MPS are irreducible for generic
values of the rapidity $u$. On the other hand, there are some special 
rapidity points (for example $u=0$), when $n_L$ saturates below $d^2$
and thus the MPS has invariant
subspaces. This does not affect the applicability of Theorem \ref{simmm},
because the existence of the unique integrable $K$-matrices is
established for almost all $u$, and the intertwiner
relation \eqref{intertwK} holds even at the special points by
continuity.

We observed that the set of special rapidity points for which
$n_L<d^2$ always consists of integers. The point $u=0$ is always
included in this set, which can be tied to the special value of the
$R$-matrix $R(0)=P$ used in the dressing \eqref{ABdef}. However,
depending on $\{\omega_j\}$ there can be other reducible points. For
example in the case of $(SU(3),SO(3))$ and $\omega_j=S_j$ with the
$S_j$ being the spin-1 generators we find that the reducible points
are $u=\{-2,0,1\}$.

We also investigated the doubly dressed matrices \eqref{double}. We
observed, that for generic rapidity parameters $u,v$ the MPS are
irreducible in all our examples. This implies that the 
solutions of the sq.r.r. also solve the twisted BYB relation. Once
again we observed special points where the MPS have invariant
subspaces, but this does not alter the conclusions.

\section{The square root relation in the XXZ chain}

\label{sec:XXZ}

Here we prove that the known solutions of the BYB in the XXZ chain
also satisfy the sq.r.r., given that we choose those solutions which
describe one-site states after the crossing transformation. This way
we can also prove that all one-site states are integrable in the XXZ
chain, even in odd volumes \cite{sajat-minden-overlaps}.

In the case of the XXZ chain the crossing transformation for the
boundary two-site block is \cite{sajat-integrable-quenches}
\begin{equation}
  \psi_{ab}(u)=(K(-\eta/2-u)\sigma_y)_{ab}.
\end{equation}
The $K$-matrices are given by a known 3-parameter family \cite{general-K-XXZ}. Choosing a
particular parametrization with constants $\alpha,\beta,\theta$ they lead to
the following form of the two-site block $\psi_{ab}(u)$:
\begin{equation}
  \label{psyparam}
\begin{split}
 \psi_{11}(u)=& -e^\theta \sinh(\eta+2u) \\
  \psi_{12}(u)=& 2(-\sinh(\alpha)\cosh(\beta)\cosh(\eta/2+u)+ \cosh(\alpha)\sinh(\beta)\sinh(\eta/2+u))\\
 \psi_{21}(u)=& 2(\sinh(\alpha)\cosh(\beta)\cosh(\eta/2+u)+ \cosh(\alpha)\sinh(\beta)\sinh(\eta/2+u))\\
 \psi_{22}(u)=&e^{-\theta}\sinh(\eta+2u).
\end{split}
\end{equation}
One-site states are obtained by choosing $\alpha=0$ and
$\beta=i\pi/2+\eta/2$. Dividing by $\sinh(\eta)$ we obtain the
two-site block
\begin{equation}
  \label{psyparam1a}
\begin{split}
 \psi_{11}(u)=& -e^\theta \frac{\sinh(\eta+2u)}{\sinh(\eta)} \\
  \psi_{12}(u)=&  \psi_{21}(u)=i\frac{\sinh(\eta/2+u)}{\sinh(\eta/2)}\\
 \psi_{22}(u)=&e^{-\theta}\frac{\sinh(\eta+2u)}{\sinh(\eta)}\\
\end{split}
\end{equation}
satisfying the initial condition
 \begin{equation}
  \psi(0)=\omega\cdot \omega,\quad \text{with}\quad\omega=
  \begin{pmatrix}
ie^{\theta/2}  \\     e^{-\theta/2}
  \end{pmatrix}.
\end{equation}
It can be checked by direct computation that this $\psi(u)$ solves the
sq.r.r. with the $\omega$ given above.

\addcontentsline{toc}{section}{References}
\providecommand{\href}[2]{#2}\begingroup\raggedright\endgroup

\end{document}